\newcommand{\lyxdot}{.}
\numberwithin{equation}{section} %% Comment out for sequentially-numbered
\numberwithin{figure}{section} %% Comment out for sequentially-numbered
\theoremstyle{plain}
\theoremstyle{plain}
\newtheorem{thm}{Theorem}
  \theoremstyle{definition}
  \newtheorem{defn}[thm]{Definition}
  \theoremstyle{plain}
  \newtheorem{prop}[thm]{Proposition}
  \theoremstyle{plain}
  \newtheorem{conjecture}[thm]{Conjecture}
  \theoremstyle{plain}
  \newtheorem{cor}[thm]{Corollary}
  \theoremstyle{plain}
  \newtheorem{lem}[thm]{Lemma}
  \theoremstyle{remark}
  \newtheorem{rem}[thm]{Remark}
\begin{document}

\title[Perturbation theory for the Nonlinear Schrödinger Equation...]{Perturbation theory for the Nonlinear Schrödinger Equation with
a random potential}

\author{Shmuel Fishman}

\address{Physics Department, Technion - Israel Institute of Technology, Haifa
32000, Israel.}

\email{fishman@physics.technion.ac.il}

\author{Yevgeny Krivolapov}

\address{Physics Department, Technion - Israel Institute of Technology, Haifa
32000, Israel.}

\email{evgkr@tx.technion.ac.il}

\author{Avy Soffer}

\address{Mathematics Department, Rutgers University, New-Brunswick, NJ 08903,
USA.}

\email{soffer@math.rutgers.edu}

\keywords{Anderson localization, NLSE, random potential, nonlinear Schrodinger,
dynamical localization, diffusion, sub-diffusion}
\begin{abstract}
A perturbation theory for the Nonlinear Schrödinger Equation (NLSE)
in 1D on a lattice was developed. The small parameter is the strength
of the nonlinearity. For this purpose secular terms were removed and
a probabilistic bound on small denominators was developed. It was
shown that the number of terms grows exponentially with the order.
The results of the perturbation theory are compared with numerical
calculations. An estimate on the remainder is obtained and it is demonstrated
that the series is asymptotic.
\end{abstract}

\date{January 23rd, 2009}

\maketitle

\section{Introduction}

We consider the problem of dynamical localization of waves in a Nonlinear
Schrödinger Equation (NLSE) \cite{Sulem1999} with a random potential
term on a lattice:\begin{equation}
i\partial_{t}\psi=-J\left[\psi\left(x+1\right)+\psi\left(x-1\right)\right]+\varepsilon_{x}^{\omega}\psi+\beta\left\vert \psi\right\vert ^{2}\psi\label{eq:NLSE}\end{equation}
 where $\psi=\psi\left(x,t\right),$ $x\in\mathbb{Z};$ and $\left\{ \varepsilon_{x}^{\omega}\right\} _{\omega\in\Omega}$
is a collection of i.i.d. random variables chosen from the set $\Omega,$
with the probability measure $\mu\left(\varepsilon_{x}\right).$ It
will be assumed that $\mu\left(\varepsilon\left(x\right)\right)$
is continuous, bounded and of finite support and, additionally, symmetric,
$\mu\left(\varepsilon_{x}\right)=\mu\left(-\varepsilon_{x}\right)$.
We assume that exponential localization is known to take place for
\emph{all} the energies of the linear problem (when $\left.\beta=0\right.$).
The decay rate $\gamma$ and the localization length, $\xi=1/\gamma$,
are given for the linear part of the model \eqref{eq:NLSE} by the
Thouless formula \cite{Thouless1972}. In particular, if $\mu\left(\varepsilon_{x}\right)$
is a uniform distribution than as was found numerically (see Appendix),
the function $\gamma\left(E\right)$ is unimodal.

The NLSE was derived for a variety of physical systems under some
approximations. It was derived in classical optics where $\psi$ is
the electric field by expanding the index of refraction in powers
of the electric field keeping only the leading nonlinear term \cite{Agrawal2007}.
For Bose-Einstein Condensates (BEC), the NLSE is a mean field approximation
where the density $\beta|\psi|^{2}$ approximates the interaction
between the atoms. In this field the NLSE is known as the Gross-Pitaevskii
Equation (GPE) \cite{Dalfovo1999,Pitaevskii2003,Leggett2001,Pitaevskii1961,Gross1961,Pitaevskii1963}.
Recently, it was rigorously established, for a large variety of interactions
and of physical conditions, that the NLSE (or the GPE) is exact in
the thermodynamic limit \cite{Erdos2007,Lieb2002}. Generalized mean
field theories, in which several mean-fields are used, were recently
developed \cite{Cederbaum2003,Alon2005}. In the absence of randomness
(\ref{eq:NLSE}) is completely integrable. For repulsive nonlinearity
$\left(\beta>0\right)$ an initially localized wavepacket spreads,
while for attractive nonlinearity $\left(\beta<0\right)$ solitons
are found typically \cite{Sulem1999}.

It is well known that in 1D in the presence of a random potential
and in the absence of nonlinearity $\left(\beta=0\right)$ with probability
one all the states are exponentially localized \cite{Anderson1958,Ishii1973,Lee1985,Lifshits1988}.
Consequently, diffusion is suppressed and in particular a wavepacket
that is initially localized will not spread to infinity. This is the
phenomenon of Anderson localization. In 2D it is known heuristically
from the scaling theory of localization \cite{Abrahams1979,Lee1985}
that all the states are localized, while in higher dimensions there
is a mobility edge that separates localized and extended states. This
problem is relevant for experiments in nonlinear optics, for example
disordered photonic lattices \cite{Schwartz2007}, where Anderson
localization was found in presence of nonlinear effects as well as
experiments on BECs in disordered optical lattices \cite{Gimperlein2005,Lye2005,Clement2005,Clement2006,Sanchez-Palencia2007,Billy2008,Fort2005,Akkermans2008,Paul2007}.
The interplay between disorder and nonlinear effects leads to new
interesting physics \cite{Fort2005,Akkermans2008,Bishop1995,Rasmussen1999,Kopidakis1999,Kopidakis2000}.
In spite of the extensive research, many fundamental problems are
still open, and, in particular, it is not clear whether in one dimension
(1D) Anderson localization can survive the effects of nonlinearities.
This will be studied here.

A natural question is whether a wave packet that is initially localized
in space will indefinitely spread for dynamics controlled by (\ref{eq:NLSE}).
A simple argument indicates that spreading will be suppressed by randomness.
If unlimited spreading takes place the amplitude of the wave function
will decay since the $L^{2}$ norm is conserved. Consequently, the
nonlinear term will become negligible and Anderson localization will
take place as a result of the randomness. Contrary to this intuition,
based on the smallness of the nonlinear term resulting from the spread
of the wave function, it is claimed that for the kicked-rotor a nonlinear
term leads to delocalization if it is strong enough \cite{Shepelyansky1993}.
It is also argued that the same mechanism results in delocalization
for the model (\ref{eq:NLSE}) with sufficiently large $\beta$, while,
for weak nonlinearity, localization takes place \cite{Shepelyansky1993,Pikovsky2008}.
Therefore, it is predicted in that work that there is a critical value
of $\beta$ that separates the occurrence of localized and extended
states. However, if one applies the arguments of \cite{Shepelyansky1993,Pikovsky2008}
to a variant of \eqref{eq:NLSE}, results that contradict numerical
solutions are found \cite{Mulansky2009,Veksler2009}. Recently, it
was rigorously shown that the initial wavepacket cannot spread so
that its amplitude vanishes at infinite time, at least for large enough
$\beta$ \cite{Kopidakis2008}. It does not contradict spreading of
a fraction of the wavefunction. Indeed, subdiffusion was found in
numerical experiments \cite{Shepelyansky1993,Kopidakis2008,Molina1998}.
In different works \cite{Molina1998,Flach2009,Skokos2009} sub-diffusion
was reported for all values of $\beta$, but with a different power
of the time dependence (compared with Ref.~\cite{Shepelyansky1993}).
It was also argued that nonlinearity may enhance discrete breathers
\cite{Kopidakis1999,Kopidakis2000}. In conclusion, it is \emph{not}
clear what is the long time behavior of a wave packet that is initially
localized, if both nonlinearity and disorder are present. This is
the main motivation for the present work. Since heuristic arguments
and numerical simulations produce conflicting results, rigorous statements
are required for further progress.

More precisely, the question of dynamical localization can be rigorously
formulated as follows: assume the initial state is, $\psi\left(x,0\right)\equiv u_{0}\left(x\right),$
where $u_{0}\left(x\right)$ is an eigenstate of the linear part of
\eqref{eq:NLSE} which is localized near $x=0$. Then for any $0<\delta<1,$
one has to prove that with probability $1-\delta$ (on the space of
the potentials)\begin{equation}
\sup_{x,t}\left\vert e^{\nu\left\vert x\right\vert }\psi\left(x,t\right)\right\vert <M_{\delta}<\infty\end{equation}
 for some $\nu>0.$

Rigorous results on dynamical localization for the linear case are
well known \cite{Combes1994,Germinet1998,Klopp1995}. However, the
nonlinear problem turns out to be very difficult to handle, even numerically.
Consider the case of small $\beta$. There are two possible mechanisms
for destruction of the localization due to nonlinearity.

One way of spreading is to spread into many random places with increasing
number of them. In that case, due to conservation of the normalization
of the solution, the solution becomes small. But then, the nonlinear
term becomes less and less important and we expect the linear theory
to take over and lead to localization. While this argument sounds
plausible there is no proof along this lines.

The second way of spreading is in a few fixed number of spikes that
hop randomly to infinity. In this case, the nonlinear term is always
relevant. It is this (possible) process that makes the proof of localization
in the nonlinear case so elusive. It also precludes a quick numerical
analysis of the problem: it may take exponentially long time to see
the hoping.

Rigorous results in this direction are of preliminary nature: In \cite{Soffer2003}
it was shown that dynamical localization holds for the linear problem
perturbed by a periodic in time and exponentially localized in space
small linear perturbation. In \cite{Bourgain2004} the above result
was extended to a quasiperiodic in time perturbation. Such perturbations
mimic the nonlinear term:\begin{equation}
\left\vert \psi\right\vert ^{2}\rightarrow\left\vert {\displaystyle \sum\limits _{j}}c_{j}u_{j}\left(x\right)e^{iE_{j}t}\right\vert ^{2}\end{equation}
 where $u_{j}$ are the eigenfunctions of the linear problem with
energies $E_{j}.$ However in other situations time dependent terms
may result in delocalization \cite{Wang2008a,Nersesyan2008}. Using
normal form transformations Wang and Zhang \cite{Wang2008} studied
the limit of strong disorder and weak nonlinearity, namely, $\epsilon=J+\beta$,
small. For initial wavefunctions with tails of weight $\delta$ starting
from point $j_{0}$, they have proved, that the wavefunction spreads
as following. There exist $C=C\left(A\right)>0$ and $\epsilon\left(A\right)>0$
and $K=K\left(A\right)>A^{2}$ such that for all $t\leq\left(\delta/C\right)\epsilon^{-A}$
the weight of the tails of the spreaded wavefunction starting from
$j=j_{0}+K$ is less that $2\delta$. On the basis of this result
they have conjectured that the spread of the wave function is at most
logarithmic in $t$.

Furthermore, it can be shown that NLSE has stationary solutions\begin{equation}
E\psi_{E}=\left(-\partial_{xx}+\varepsilon_{x}^{\omega}\right)\psi_{E}+\beta\left\vert \psi_{E}\right\vert ^{2}\psi_{E}\end{equation}
 which are exponentially localized for almost all $E$ with a localization
length that is identical to the one of the linear problem \cite{Albanese1988,Albanese1988a,Albanese1991,Iomin2007,Fishman2008}.

In our previous work \cite{Fishman2008a} we have developed a perturbation
theory in $\beta$. By considering the first order expansion we have
proved that for times of order $O\left(\beta^{-2}\right)$ the solution
of \eqref{eq:NLSE} remains exponentially localized. A result of similar
nature for a nonlinear equation of a different structure was obtained
in \cite{Benettin1988}. In the current work we consider an expansion
of any order, $N$, in $\beta$. This expansion enables\emph{ in principle
}the calculation of the solution to any order in $N$. A bound on
the error can be computed using only propreties of the linear problem
$\left(\beta=0\right)$. Therefore this work has the potential to
develop into a method for solution of some type of nonlinear differential
equations. In Section \ref{sec:Organization-of-the} we construct
the solution as a series in the eigenfunctions of the linear problem.
Standard perturbation theory for the coefficients does not apply:
we encounter small divisor problems and secular terms (formally infinite).
Removing the secular terms requires the {}``renormalization'' of
the original linear Hamiltonian by shifting the energies (Section
\ref{sec:Elimination-of-secular}). The estimates of the small divisor
terms are performed in the spirit of the work of Aizenman-Molchanov
(A-M) \cite{Aizenman1993}. In Section \ref{sec:The-entropy-problem}
the entropy problem is resolved by bounding an appropriate recursive
relation. A general probabilistic bound on the terms of the perturbation
theory is derived in Section \ref{sec:Bounding-the-general} and the
quality of the perturbation theory is tested in Section \ref{sec:Numerical-results}.
In Section \ref{sec:Bounding-the-remainder} the remainder terms are
controlled by a bootstrap argument. The results are summarized in
Section \ref{sec:Summary} and the open problems are listed there.

In summary, in this work a perturbation theory for \eqref{eq:NLSE}
in powers of $\beta$ was developed and bounds on the various terms
were obtained. \emph{The work is only partly rigorous.} In some parts
it relies on Conjectures that we test numerically.

\section{\label{sec:Organization-of-the}Organization of the perturbation
theory}

Our goal is to analyze the nonlinear Schrödinger equation\begin{equation}
i\partial_{t}\psi=H_{0}\psi+\beta\left\vert \psi\right\vert ^{2}\psi\label{eq:GPE}\end{equation}
 where $H_{0}$ is the Anderson Hamiltonian,\begin{equation}
H_{0}\psi\left(x\right)=-J\left[\psi\left(x+1\right)+\psi\left(x-1\right)\right]+\varepsilon_{x}\psi\left(x\right).\end{equation}
We assume throughout the paper that $H_{0}$ satisfies the conditions
for localization, namely, for almost all the realizations, $\omega,$
of the disordered potential, all the eigenstates of $H_{0},$ $u_{m},$
are exponentially localized and have an envelope of the form of\begin{equation}
\left\vert u_{m}\left(x\right)\right\vert \leq D_{\omega,\varepsilon}e^{\varepsilon\left\vert x_{m}\right\vert }e^{-\gamma\left\vert x-x_{m}\right\vert },\label{eq:loc_states}\end{equation}
 where $\varepsilon>0,$ $x_{m}$ is the localization center which
will be defined at the next subsection, $\gamma$ is the inverse of
the localization length, $\xi=\gamma^{-1},$ and $D_{\omega,\varepsilon}$
is a constant dependent on $\varepsilon$ and the realization of the
disordered potential \cite{Rio1995,Rio1996} (better estimates were
proven recently in \cite{DeBi`evre2000,Germinet2006}). It is of importance
that $D_{\omega,\varepsilon}$ does not depend on the energy of the
state. In the present work only realizations $\omega,$ where\begin{equation}
\left\vert D_{\omega,\varepsilon}\right\vert \leq D_{\delta,\varepsilon}<\infty\label{eq:real_assump}\end{equation}
 are considered. This is satisfied for a set of a measure $\left.1-\delta\right.$,
since \eqref{eq:loc_states} is false only for a measure zero of potentials.

\subsection{Assignment of eigenfunctions to sites}

It is tempting to assign eigenfunctions to sites by their maxima,
namely, $u_{E_{i}}$ is the eigenfunction with energy $E_{i}$ and
a maximum at site $i$. This assignment is very unstable with respect
to the change of realizations. This is due to the fact that the point
where the maximum is found, which is sometimes called the localization
center, may change as a result of a very small change in the on-site
energies $\left\{ \varepsilon_{x}\right\} $. To avoid this, the assignment
is defined as the center of mass \cite{Fishman2008b},\begin{equation}
x_{E}=\sum\nolimits _{x}x\left|u_{E_{i}}\left(x\right)\right|^{2}\end{equation}

\begin{defn}
\label{def:assignment_to_sites}The state $u_{E_{i}}$ is assigned
to site $i$ if $i=\left[x_{E}\right].$ If several states are assigned
to the same site we order them by energy.
\end{defn}

\subsection{The perturbation expansion}

The wavefunction can be expanded using the eigenstates of $H_{0}$
as \begin{equation}
\psi\left(x,t\right)=\sum_{m}c_{m}\left(t\right)e^{-iE_{m}t}u_{m}\left(x\right).\label{eq:expansion}\end{equation}
For the nonlinear equation the dependence of the expansion coefficients,
$c_{m}\left(t\right),$ is found by inserting this expansion into
(\ref{eq:GPE}), resulting in\begin{align}
i\partial_{t}\sum_{m}c_{m}e^{-iE_{m}t}u_{m}\left(x\right) & =H_{0}\sum_{m}c_{m}e^{-iE_{m}t}u_{m}\left(x\right)\\
 & +\beta\left\vert \sum_{m}c_{m}e^{-iE_{m}t}u_{m}\left(x\right)\right\vert ^{2}\sum_{m_{3}}c_{m_{3}}e^{-iE_{m_{3}}t}u_{m_{3}}\left(x\right).\nonumber \end{align}
 Multiplying by $u_{n}\left(x\right)$ and integrating gives\begin{equation}
i\partial_{t}c_{n}=\beta\sum_{m_{1},m_{2},m_{3}}V_{n}^{m_{1}m_{2}m_{3}}c_{m_{1}}^{\ast}c_{m_{2}}c_{m_{3}}e^{i\left(E_{m_{1}}+E_{n}-E_{m_{2}}-E_{m_{3}}\right)t}\label{eq:c_n_exact}\end{equation}
 where $V_{n}^{m_{1}m_{2}m_{3}}$ is an overlap sum\begin{equation}
V_{n}^{m_{1}m_{2}m_{3}}=\sum_{x}u_{n}\left(x\right)u_{m_{1}}\left(x\right)u_{m_{2}}\left(x\right)u_{m_{3}}\left(x\right).\label{eq:overlap_int}\end{equation}
By definition $V_{n}^{m_{1}m_{2}m_{3}}$ is symmetric with respect
to an interchange of any two indices. Additionally, since the $u_{n}\left(x\right)$
are exponentially localized around $x_{n},$ $V_{n}^{m_{1}m_{2}m_{3}}$
is not negligible only when the interval, \begin{equation}
\delta m\equiv\max\left[x_{n},x_{m_{i}}\right]-\min\left[x_{n},x_{m_{i}}\right],\end{equation}
 is of the order of the localization length, around $x_{n},$\begin{align}
\left\vert V_{n}^{m_{1}m_{2}m_{3}}\right\vert  & \leq D_{\delta,\varepsilon}^{4}e^{\varepsilon\left(\left\vert x_{n}\right\vert +\left\vert x_{m_{1}}\right\vert +\left\vert x_{m_{2}}\right\vert +\left\vert x_{m_{3}}\right\vert \right)}\sum_{x}\cdot e^{-\gamma\left(\left\vert x-x_{n}\right\vert +\left\vert x-x_{m_{1}}\right\vert +\left\vert x-x_{m_{2}}\right\vert +\left\vert x-x_{m_{3}}\right\vert \right)}\label{eq:overlap}\\
 & \leq D_{\delta,\varepsilon}^{4}e^{\varepsilon\left(\left\vert x_{n}\right\vert +\left\vert x_{m_{1}}\right\vert +\left\vert x_{m_{2}}\right\vert +\left\vert x_{m_{3}}\right\vert \right)}e^{-\frac{\left(\gamma-\varepsilon^{\prime}\right)}{3}\left(\left\vert x_{n}-x_{m_{1}}\right\vert +\left\vert x_{n}-x_{m_{2}}\right\vert +\left\vert x_{n}-x_{m_{3}}\right\vert \right)}\times\nonumber \\
 & \times\sum_{x}\cdot e^{-\varepsilon^{\prime}\left(\left\vert x-x_{n}\right\vert +\left\vert x-x_{m_{1}}\right\vert +\left\vert x-x_{m_{2}}\right\vert +\left\vert x-x_{m_{3}}\right\vert \right)}\nonumber \\
 & \leq V_{\delta}^{\varepsilon,\varepsilon^{\prime}}e^{\varepsilon\left(\left\vert x_{n}\right\vert +\left\vert x_{m_{1}}\right\vert +\left\vert x_{m_{2}}\right\vert +\left\vert x_{m_{3}}\right\vert \right)}e^{-\frac{1}{3}\left(\gamma-\varepsilon^{\prime}\right)\left(\left\vert x_{n}-x_{m_{1}}\right\vert +\left\vert x_{n}-x_{m_{2}}\right\vert +\left\vert x_{n}-x_{m_{3}}\right\vert \right)}.\nonumber \end{align}
Here we have used the triangle inequality\begin{align}
\left(\left\vert x-x_{n}\right\vert +\left\vert x-x_{m_{1}}\right\vert \right)+ & \left(\left\vert x-x_{n}\right\vert +\left\vert x-x_{m_{2}}\right\vert \right)+\label{eq:first_triangle}\\
+\left(\left\vert x-x_{n}\right\vert +\left\vert x-x_{m_{3}}\right\vert \right) & \geq\left\vert x_{n}-x_{m_{1}}\right\vert +\left\vert x_{n}-x_{m_{2}}\right\vert +\left\vert x_{n}-x_{m_{3}}\right\vert \nonumber \end{align}
 to obtain the second line. Our objective is to develop a perturbation
expansion of the $c_{m}\left(t\right)$ in powers of $\beta$ and
to calculate them order by order in $\beta.$ The required expansion
is\begin{equation}
c_{n}\left(t\right)=c_{n}^{\left(0\right)}+\beta c_{n}^{\left(1\right)}+\beta^{2}c_{n}^{\left(2\right)}+\cdots+\beta^{N-1}c_{n}^{\left(N-1\right)}+\beta^{N}Q_{n},\label{eq:cn_expand}\end{equation}
 where the expansion is till order $\left(N-1\right)$ and $Q_{n}$
is the remainder term. We will assume the initial condition\begin{equation}
c_{n}\left(t=0\right)=\delta_{n0}.\end{equation}
 The equations for the two leading orders are presented in what follows.
The leading order is\begin{equation}
c_{n}^{\left(0\right)}=\delta_{n0}.\label{eq:cn0}\end{equation}
The equation for the first order is\begin{equation}
i\partial_{t}c_{n}^{\left(1\right)}=\sum_{m_{1},m_{2},m_{3}}V_{n}^{m_{1}m_{2}m_{3}}c_{m_{1}}^{\ast\left(0\right)}c_{m_{2}}^{\left(0\right)}c_{m_{3}}^{\left(0\right)}e^{i\left(E_{n}+E_{m_{1}}-E_{m_{2}}-E_{m_{3}}\right)t}=V_{n}^{000}e^{i\left(E_{n}-E_{0}\right)t}\end{equation}
 and its solution is\begin{equation}
c_{n}^{\left(1\right)}=V_{n}^{000}\left(\frac{1-e^{i\left(E_{n}-E_{0}\right)t}}{E_{n}-E_{0}}\right).\end{equation}
 The resulting equation for the second order is\begin{align}
i\partial_{t}c_{n}^{\left(2\right)} & =\sum_{m_{1},m_{2},m_{3}}V_{n}^{m_{1}m_{2}m_{3}}c_{m_{1}}^{\ast\left(1\right)}c_{m_{2}}^{\left(0\right)}c_{m_{3}}^{\left(0\right)}e^{i\left(E_{n}+E_{m_{1}}-E_{m_{2}}-E_{m_{3}}\right)t}+\\
 & +2\sum_{m_{1},m_{2},m_{3}}V_{n}^{m_{1}m_{2}m_{3}}c_{m_{1}}^{\ast\left(0\right)}c_{m_{2}}^{\left(1\right)}c_{m_{3}}^{\left(0\right)}e^{i\left(E_{n}+E_{m_{1}}-E_{m_{2}}-E_{m_{3}}\right)t}.\nonumber \end{align}
 Substitution of the lower orders yields\begin{align}
i\partial_{t}c_{n}^{\left(2\right)} & =\sum_{m}V_{n}^{m00}V_{m}^{000}\left[\left(\frac{1-e^{-i\left(E_{m}-E_{0}\right)t}}{E_{m}-E_{0}}\right)e^{i\left(E_{n}+E_{m}-2E_{0}\right)t}+\right.\\
 & \left.+2\left(\frac{1-e^{i\left(E_{m}-E_{0}\right)t}}{E_{m}-E_{0}}\right)e^{i\left(E_{n}-E_{m}\right)t}\right]\nonumber \\
 & =\sum_{m}\frac{V_{n}^{m00}V_{m}^{000}}{E_{m}-E_{0}}\left[\left(e^{i\left(E_{n}+E_{m}-2E_{0}\right)t}-e^{i\left(E_{n}-E_{0}\right)t}\right)+\right.\nonumber \\
 & \left.+2\left(e^{i\left(E_{n}-E_{m}\right)t}-e^{i\left(E_{n}-E_{0}\right)t}\right)\right]\nonumber \\
 & =\sum_{m}\frac{V_{n}^{m00}V_{m}^{000}}{E_{m}-E_{0}}\left[e^{i\left(E_{n}+E_{m}-2E_{0}\right)t}-3e^{i\left(E_{n}-E_{0}\right)t}+2e^{i\left(E_{n}-E_{m}\right)t}\right].\nonumber \end{align}
 We notice that divergence of this expansion for any value of $\beta$
may result from three major problems: the secular terms problem, the
entropy problem (i.e., factorial proliferation of terms), and the
small denominators problem.

\section{\label{sec:Elimination-of-secular}Elimination of secular terms}

We first show how to derive the equations for $c_{n}\left(t\right)$
where the secular terms are eliminated.
\begin{prop}
To each order in $\beta,$ $\psi\left(x,t\right)$ can be expanded
as \begin{equation}
\psi\left(x,t\right)={\displaystyle \sum\limits _{n}}c_{n}\left(t\right)e^{-iE_{n}^{\prime}t}u_{n}\left(x\right)\end{equation}
 with\begin{equation}
E_{n}^{\prime}\equiv E_{n}^{\left(0\right)}+\beta E_{n}^{\left(1\right)}+\beta^{2}E_{n}^{\left(2\right)}+\cdots\label{eq:En_expand}\end{equation}
 and $E_{n}^{\left(0\right)}$ are the eigenvalues of $H_{0},$ in
such a way that there are no secular terms to any given order. The
$E_{n}'$ are called the renormalized energies.
\end{prop}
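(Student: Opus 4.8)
The plan is to proceed by induction on the order $N$ in $\beta$, constructing the corrections $E_n^{(k)}$ so that at each order the resonant (secular) contribution is cancelled. First I would set up the iteration: writing $E_n' = E_n^{(0)} + \sum_{k\geq 1}\beta^k E_n^{(k)}$, I substitute the expansion \eqref{eq:cn_expand} together with the ansatz $\psi = \sum_n c_n(t) e^{-iE_n' t} u_n(x)$ into \eqref{eq:GPE}. The key observation is that the shift $E_n'$ appears in the phase $e^{-iE_n' t}$, and expanding this phase in $\beta$ produces, at order $\beta^k$, a term of the form $-i t\, E_n^{(k)} e^{-iE_n^{(0)}t}$ on the left-hand side. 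This is precisely the structure needed to cancel a secular (linearly-growing-in-$t$) term on the right-hand side. So the mechanism is: the freedom in choosing each $E_n^{(k)}$ is used to kill the unique resonant term at that order.

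The heart of the argument is identifying which terms in $i\partial_t c_n^{(k)}$ are secular. A term on the right-hand side is secular precisely when its oscillating exponential has vanishing frequency, i.e.\ when the combination $E_n + E_{m_1} - E_{m_2} - E_{m_3}$ (after accounting for the renormalized phases accumulated from lower orders) equals zero. For generic realizations the only systematic way this happens is the \emph{diagonal} resonance, where indices pair up so that the energy differences cancel identically for all realizations — for example the contribution proportional to $V_n^{n00}|c_0|^2$-type terms that reproduce the frequency $E_n - E_0$ seen already at first order. I would show, order by order, that integrating $i\partial_t c_n^{(k)}$ in time converts each such zero-frequency term into a term linear in $t$ (a secular term), and that choosing
\begin{equation}
E_n^{(k)} = -\,(\text{coefficient of the resonant term at order }k)
\end{equation}
exactly removes it. At second order this is already visible: the term $-3 e^{i(E_n-E_0)t}$ in $i\partial_t c_n^{(2)}$ integrates to a secular piece $\propto t$, and the candidate counterterm is $E_n^{(2)} = -\sum_m \frac{V_n^{m00}V_m^{000}}{E_m - E_0}\cdot(\text{appropriate factor})$ restricted to the resonant index configuration.

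The inductive step is then formal: assuming $c_n^{(0)},\dots,c_n^{(k-1)}$ and $E_n^{(0)},\dots,E_n^{(k-1)}$ have been chosen so that $c_n^{(j)}(t)$ is bounded (non-secular) for all $j<k$, I collect the order-$\beta^k$ terms, separate the resonant frequency-zero part from the genuinely oscillating part, and define $E_n^{(k)}$ to annihilate the resonant part; the remaining oscillating terms integrate to bounded quasiperiodic functions, giving a well-defined bounded $c_n^{(k)}$. I would verify that the renormalized phases from lower orders, when expanded, contribute only to lower-order equations and to the resonance-frequency bookkeeping, so that the scheme closes consistently.

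The main obstacle I expect is the \emph{bookkeeping of resonances} at high order: unlike the clean first and second orders shown above, at order $k$ the expansion of the product of three factors $c^*_{m_1}c_{m_2}c_{m_3}$ (each itself an expansion through order $k-1$) together with the expanded phase factors $e^{-iE_n' t}$ produces a combinatorially large collection of terms, and one must argue that the zero-frequency (secular) part is uniquely isolated and is exactly matched by the single scalar degree of freedom $E_n^{(k)}$. Establishing that the counterterm is both necessary and sufficient — that no secular term survives and that $E_n^{(k)}$ is not over-determined — is the delicate point; it relies on the fact that the only \emph{realization-independent} vanishing of the frequency combination is the diagonal one, which couples back to the already-renormalized phase of $c_n$ itself. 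The entropy (factorial proliferation) and small-denominator issues flagged at the end of Section~\ref{sec:Organization-of-the} are \emph{not} part of this statement — here I only need existence of a secular-free expansion, with quantitative control deferred to the later sections.
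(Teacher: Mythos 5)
Your overall strategy --- a Poincar\'e--Lindstedt-style, order-by-order determination of $E_n^{(k)}$ to kill resonant terms --- is the same one the paper uses, but the specific mechanism you propose breaks down at every site $n\neq 0$, and this is precisely the point the paper's proof is built around. You propose to expand the phase $e^{-iE_n't}$ in $\beta$, so that at order $\beta^k$ the left-hand side acquires a term $-it\,E_n^{(k)}e^{-iE_n^{(0)}t}$ available to cancel the order-$k$ secular term. But that term is really $-it\,E_n^{(k)}c_n^{(0)}e^{-iE_n^{(0)}t}$, and with the initial condition $c_n^{(0)}=\delta_{n0}$ (eq. \eqref{eq:cn0}) the coefficient of $E_n^{(k)}$ vanishes identically for all $n\neq0$. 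So the ``single scalar degree of freedom $E_n^{(k)}$'' you invoke has zero coefficient exactly where you need it: the order-$k$ energy correction cannot remove any order-$k$ secular term with $n\neq0$. The paper resolves this by the pairing in \eqref{eq:sec_elim}: the counterterm available at order $k$ is $\sum_{s=0}^{k-1}E_n^{(k-s)}c_n^{(s)}$, and since $c_n^{(0)}$ vanishes for $n\neq0$ one must instead use $E_n^{(k-1)}$ multiplied by the time-independent part of $c_n^{(1)}$, namely $\left(1-\delta_{n0}\right)V_n^{000}/\left(E_n'-E_0'\right)$ from \eqref{eq:cn1}. Thus $E_0^{(k)}$ removes the $n=0$ secular terms at order $k$, while for $n\neq0$ it is $E_n^{(k-1)}$ --- one order behind --- that removes them. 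Your second-order example reflects the same confusion: in the paper, the quantity fixed at second order for $n\neq0$ is $E_n^{(1)}=2V_n^{n00}$, not an $E_n^{(2)}$; only $E_0^{(2)}$ is determined at that order.

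A second, more technical divergence: the paper deliberately does \emph{not} expand the exponentials $e^{i\left(E_n'+E_{m_1}'-E_{m_2}'-E_{m_3}'\right)t}$ in $\beta$ (they are $O\left(1\right)$), so that secular terms appear simply as time-independent terms on the right-hand side of \eqref{eq:a2} and are removed before any integration, with no powers of $t$ ever appearing. Your route of expanding $e^{-iE_n't}$ generates, already at order $\beta^2$, terms proportional to $t^2$ (from $\frac{1}{2}\left(E_n^{(1)}\right)^2t^2$), so the cancellation would have to be organized across all powers of $t$ simultaneously --- a bookkeeping burden the unexpanded-phase formulation avoids entirely. Neither issue is fatal to the underlying idea, but as written your scheme fails at its first nontrivial step ($n\neq0$, any order), and repairing it leads you essentially to the paper's equation \eqref{eq:sec_elim}.
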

Here we first develop the general scheme for the elimination of the
secular terms and then demonstrate the construction of $E_{n}^{\prime}$
when the $c_{n}\left(t\right)$ are calculated to the second order
in $\beta$ (see~\ref{eq:cn_order2},\ref{eq:en_oder2}).

Inserting the expansion into (\ref{eq:GPE}) yields\begin{align}
i{\displaystyle \sum\limits _{m}}\left[\partial_{t}c_{m}-iE_{m}^{\prime}c_{m}\right]e^{-iE_{m}^{\prime}t}u_{m}\left(x\right) & ={\displaystyle \sum\limits _{m}}E_{m}^{\left(0\right)}c_{m}e^{-iE_{m}^{\prime}t}u_{m}\left(x\right)+\\
+\beta{\displaystyle \sum\limits _{m_{1}m_{2}m_{3}}}c_{m_{1}}^{\ast}c_{m_{2}}c_{m_{3}} & e^{i\left(E_{m_{1}}^{\prime}-E_{m_{2}}^{\prime}-E_{m_{3}}^{\prime}\right)t}u_{m_{1}}\left(x\right)u_{m_{2}}\left(x\right)u_{m_{3}}\left(x\right).\nonumber \end{align}
 Multiplication by $u_{n}\left(x\right)$ and integration gives\begin{equation}
i\partial_{t}c_{n}=\left(E_{n}^{\left(0\right)}-E_{n}^{\prime}\right)c_{n}+\beta\sum_{m_{1}m_{2}m_{3}}V_{n}^{m_{1}m_{2}m_{3}}c_{m_{1}}^{\ast}c_{m_{2}}c_{m_{3}}e^{i\left(E_{n}^{\prime}+E_{m_{1}}^{\prime}-E_{m_{2}}^{\prime}-E_{m_{3}}^{\prime}\right)t},\label{eq:diff_eq}\end{equation}
 where the $V_{n}^{m_{1}m_{2}m_{3}}$ are given by (\ref{eq:overlap_int}).
Following (\ref{eq:cn_expand}) we expand $c_{n}$ in orders of $\beta,$
namely,\begin{equation}
c_{n}=c_{n}^{\left(0\right)}+\beta c_{n}^{\left(1\right)}+\beta^{2}c_{n}^{\left(2\right)}+\cdots,.\end{equation}
 Inserting this expansion into \eqref{eq:diff_eq} and comparing the
powers of $\beta$ without expanding the exponent, produces the following
equation for the $k-th$ order\begin{align}
i\partial_{t}c_{n}^{\left(k\right)} & =-\sum_{s=0}^{k-1}E_{n}^{\left(k-s\right)}c_{n}^{\left(s\right)}+\label{eq:a2}\\
 & +\sum_{m_{1}m_{2}m_{3}}V_{n}^{m_{1}m_{2}m_{3}}\left[\sum_{r=0}^{k-1}\sum_{s=0}^{k-1-r}\sum_{l=0}^{k-1-r-s}c_{m_{1}}^{\left(r\right)\ast}c_{m_{2}}^{\left(s\right)}c_{m_{3}}^{\left(l\right)}\right]e^{i\left(E_{n}^{\prime}+E_{m_{1}}^{\prime}-E_{m_{2}}^{\prime}-E_{m_{3}}^{\prime}\right)t}.\nonumber \end{align}
Note that the exponent is of order $O\left(1\right)$ in $\beta$,
and therefore we may choose not to expand it in powers of $\beta$.
However, it generates an expansion where both $E{}_{m}^{\left(l\right)}$
and $c_{n}^{\left(k\right)}$ depend on $\beta$. For the expansion
\eqref{eq:cn_expand} to be valid, both $E{}_{m}^{\left(l\right)}$
and $c_{n}^{\left(k\right)}$ should be $O\left(1\right)$ in $\beta$,
this is satisfied since the RHS of \eqref{eq:a2} contains only $c_{n}^{\left(r\right)}$
such that $r<k$. Namely, this equation gives each order in terms
of the lower ones, with the initial condition of $\left.c_{n}^{\left(0\right)}\left(t\right)=\delta_{n0}\right..$
Solution of $k$ equations \eqref{eq:a2} gives the solution of the
differential equation \eqref{eq:diff_eq} to order $k$. Since, the
exponent in \eqref{eq:a2} is of order $O\left(1\right)$ in $\beta$
we can select its argument to be of any order in $\beta$. However,
for the removal of the secular terms, as will be explained bellow,
it is instructive to set the order of the argument to be $k-1$, as
the higher orders were not calculated at this stage. Secular terms
are created when there are time independent terms in the RHS of the
equation above. We eliminate those terms by using the first two terms
in the first summation on the RHS. We make use of the fact that $c_{n}^{\left(0\right)}=\delta_{n0}$
and $c_{n}^{\left(1\right)}$ can be easily determined (see (\ref{eq:c01},\ref{eq:cn1})),
and used to calculate $E_{n=0}^{\left(k\right)}$ and $E_{n\neq0}^{\left(k-1\right)}$
that eliminate the secular terms in the equation for $c_{n}^{\left(k\right)},$
that is \begin{equation}
E_{n}^{\left(k\right)}c_{n}^{\left(0\right)}+E_{n}^{\left(k-1\right)}c_{n}^{\left(1\right)}=E_{n}^{\left(k\right)}\delta_{n0}+E_{n}^{\left(k-1\right)}\left(1-\delta_{n0}\right)\frac{V_{n}^{000}}{E_{n}^{\prime}-E_{0}^{\prime}},\label{eq:sec_elim}\end{equation}
 where only the time-independent part of $c_{n}^{\left(1\right)}$
was used. In other words, we choose $E_{n}^{\left(k\right)}$ and
$E_{n\neq0}^{\left(k-1\right)}$ so that the time-independent terms
on the RHS of (\ref{eq:a2}) are eliminated. $E_{0}^{\left(k\right)}$
will eliminate all secular terms with $n=0,$ and $E_{n}^{\left(k-1\right)}$
will eliminate all secular terms with $n\neq0.$ In the following,
we will demonstrate this procedure for the first two orders, and calculate
$c_{n}^{\left(1\right)}$, and obtain an equation for $c_{n}^{\left(2\right)}$.

In the first order of the expansion in $\beta$ we obtain\begin{align}
i\partial_{t}c_{n}^{\left(1\right)} & =-E_{n}^{\left(1\right)}c_{n}^{\left(0\right)}+\sum_{m_{1}m_{2}m_{3}}V_{n}^{m_{1}m_{2}m_{3}}c_{m_{1}}^{\ast\left(0\right)}c_{m_{2}}^{\left(0\right)}c_{m_{3}}^{\left(0\right)}e^{i\left(E_{n}'+E_{m_{1}}'-E_{m_{2}}'-E_{m_{3}}'\right)t}\\
 & =-E_{n}^{\left(1\right)}\delta_{n0}+V_{n}^{000}e^{i\left(E_{n}'-E_{0}'\right)t}.\nonumber \end{align}
 For $n=0$ the equation produces a secular term\begin{align}
i\partial_{t}c_{0}^{\left(1\right)} & =-E_{0}^{\left(1\right)}+V_{0}^{000}\label{eq:c01}\\
c_{0}^{\left(1\right)} & =it\cdot\left(E_{0}^{\left(1\right)}-V_{0}^{000}\right).\nonumber \end{align}
 Setting \begin{equation}
E_{0}^{\left(1\right)}=V_{0}^{000}\end{equation}
 will eliminate this secular term and gives \begin{equation}
c_{0}^{\left(1\right)}=0\end{equation}
 For $n\neq0$ there are no secular terms in this order, therefore
finally\begin{equation}
c_{n}^{\left(1\right)}=\left(1-\delta_{n0}\right)V_{n}^{000}\left(\frac{1-e^{i\left(E_{n}'-E_{0}'\right)t}}{E_{n}'-E_{0}'}\right),\label{eq:cn1}\end{equation}
where to this order $E_{n}'=E_{n}$ and $E'_{0}=E_{0}$.

In the second order of the expansion in $\beta$ we have\begin{align}
i\partial_{t}c_{n}^{\left(2\right)} & =-E_{n}^{\left(1\right)}c_{n}^{\left(1\right)}-E_{n}^{\left(2\right)}c_{n}^{\left(0\right)}+\\
 & +\sum_{m_{1}m_{2}m_{3}}V_{n}^{m_{1}m_{2}m_{3}}\left(c_{m_{1}}^{\ast\left(1\right)}c_{m_{2}}^{\left(0\right)}c_{m_{3}}^{\left(0\right)}+2c_{m_{1}}^{\ast\left(0\right)}c_{m_{2}}^{\left(1\right)}c_{m_{3}}^{\left(0\right)}\right)e^{i\left(E_{n}^{\prime}+E_{m_{1}}^{\prime}-E_{m_{2}}^{\prime}-E_{m_{3}}^{\prime}\right)t}\nonumber \\
 & =-E_{n}^{\left(2\right)}\delta_{n0}-E_{n}^{\left(1\right)}c_{n}^{\left(1\right)}+\sum_{m_{1}}V_{n}^{m_{1}00}\left(c_{m_{1}}^{\ast\left(1\right)}e^{i\left(E_{n}^{\prime}+E_{m_{1}}^{\prime}-2E_{0}^{\prime}\right)t}+2c_{m_{1}}^{\left(1\right)}e^{i\left(E_{n}^{\prime}-E_{m_{1}}^{\prime}\right)t}\right).\nonumber \end{align}
For $n=0$ it takes the form\[
i\partial_{t}c_{0}^{\left(2\right)}=-E_{0}^{\left(2\right)}+\sum_{m}V_{0}^{m00}\left(c_{m}^{\ast\left(1\right)}e^{i\left(E_{m}^{\prime}-E_{0}^{\prime}\right)t}+2c_{m}^{\left(1\right)}e^{i\left(E_{0}^{\prime}-E_{m}^{\prime}\right)t}\right).\]
 Substitution of (\ref{eq:c01}) and \eqref{eq:cn1} yields\begin{align}
i\partial_{t}c_{0}^{\left(2\right)} & =-E_{0}^{\left(2\right)}+\sum_{m\neq0}\frac{V_{0}^{m00}V_{m}^{000}}{E_{m}^{\prime}-E_{0}^{\prime}}\left[\left(1-e^{-i\left(E_{m}^{\prime}-E_{0}^{\prime}\right)t}\right)e^{i\left(E_{m}^{\prime}-E_{0}^{\prime}\right)t}+\right.\\
 & \left.+2\left(1-e^{i\left(E_{m}^{\prime}-E_{0}^{\prime}\right)t}\right)e^{i\left(E_{0}^{\prime}-E_{m}^{\prime}\right)t}\right]\nonumber \\
 & =-E_{0}^{\left(2\right)}+\sum_{m\neq0}\frac{V_{0}^{m00}V_{m}^{000}}{E_{m}^{\prime}-E_{0}^{\prime}}\left(e^{i\left(E_{m}^{\prime}-E_{0}^{\prime}\right)t}+2e^{i\left(E_{0}^{\prime}-E_{m}^{\prime}\right)t}-3\right),\nonumber \end{align}
 and the secular term could be removed by setting\begin{equation}
E_{0}^{\left(2\right)}=-3\sum_{m\neq0}\frac{V_{0}^{m00}V_{m}^{000}}{E_{m}^{\prime}-E_{0}^{\prime}}.\end{equation}
 For $n\neq0$ we have\begin{align}
i\partial_{t}c_{n}^{\left(2\right)} & =-E_{n}^{\left(1\right)}V_{n}^{000}\left(\frac{1-e^{i\left(E_{n}^{\prime}-E_{0}^{\prime}\right)t}}{E_{n}^{\prime}-E_{0}^{\prime}}\right)+\\
 & +\sum_{m}V_{n}^{m00}\left(c_{m}^{\ast\left(1\right)}e^{i\left(E_{n}^{\prime}+E_{m}^{\prime}-2E_{0}^{\prime}\right)t}+2c_{m}^{\left(1\right)}e^{i\left(E_{n}^{\prime}-E_{m}^{\prime}\right)t}\right)\nonumber \\
= & -E_{n}^{\left(1\right)}V_{n}^{000}\left(\frac{1-e^{i\left(E_{n}^{\prime}-E_{0}^{\prime}\right)t}}{E_{n}^{\prime}-E_{0}^{\prime}}\right)+\nonumber \\
 & +\sum_{m\neq0}\frac{V_{n}^{m00}V_{m}^{000}}{E_{m}^{\prime}-E_{0}^{\prime}}\left(e^{i\left(E_{n}^{\prime}+E_{m}^{\prime}-2E_{0}^{\prime}\right)t}+2e^{i\left(E_{n}^{\prime}-E_{m}^{\prime}\right)t}-3e^{i\left(E_{n}^{\prime}-E_{0}^{\prime}\right)t}\right).\nonumber \end{align}
 We notice that the second term in the sum produces secular terms
for $m=n.$ Those terms could be removed by setting\begin{equation}
-\frac{E_{n}^{\left(1\right)}V_{n}^{000}}{E_{n}^{\prime}-E_{0}^{\prime}}+\frac{2V_{n}^{n00}V_{n}^{000}}{E_{n}^{\prime}-E_{0}^{\prime}}=0\qquad n\neq0\end{equation}
\[
E_{n}^{\left(1\right)}=2V_{n}^{n00}\qquad n\neq0.\]
 To conclude, up to the second order in $\beta$ , the perturbed energies,
which are required to remove the secular terms, are given by\begin{equation}
E_{n}^{\prime}=E_{n}^{\left(0\right)}+\beta V_{n}^{n00}\left(2-\delta_{n0}\right)-3\beta^{2}\delta_{n0}\sum_{m\neq0}\frac{\left(V_{m}^{000}\right)^{2}}{E_{m}^{\prime}-E_{0}^{\prime}},\label{eq:en_oder2}\end{equation}
 and the corresponding correction to $c_{n}^{\left(0\right)}$ is\begin{equation}
i\partial_{t}c_{n}^{\left(2\right)}=\left\{ \begin{array}{cc}
\sum_{m\neq0}\frac{V_{0}^{m00}V_{m}^{000}}{E_{m}^{\prime}-E_{0}^{\prime}}\left(e^{i\left(E_{m}^{\prime}-E_{0}^{\prime}\right)t}+2e^{i\left(E_{0}^{\prime}-E_{m}^{\prime}\right)t}\right) & n=0\\
\begin{array}{c}
\frac{2V_{n}^{n00}V_{n}^{000}}{E_{n}^{\prime}-E_{0}^{\prime}}e^{i\left(E_{n}^{\prime}-E_{0}^{\prime}\right)t}+\sum_{m\neq0,n}\frac{2V_{n}^{m00}V_{m}^{000}}{E_{m}^{\prime}-E_{0}^{\prime}}e^{i\left(E_{n}^{\prime}-E_{m}^{\prime}\right)t}+\\
+\sum_{m\neq0}\frac{V_{n}^{m00}V_{m}^{000}}{E_{m}^{\prime}-E_{0}^{\prime}}\left(e^{i\left(E_{n}^{\prime}+E_{m}^{\prime}-2E_{0}^{\prime}\right)t}-3e^{i\left(E_{n}^{\prime}-E_{0}^{\prime}\right)t}\right)\end{array} & n\neq0\end{array}\right..\label{eq:cn_order2}\end{equation}
 Note that in the calculation of $c_{n}$ to higher orders in $\beta,$
a secular term of the order $\beta^{2}$ will be generated for $\left.n\neq0\right..$
Secular terms with increasing complexity are generated in the cancellation
of higher orders, however, as demonstrated by (\ref{eq:sec_elim}),
secular terms are removed with the same $c_{n}^{\left(0\right)}$
and $c_{n}^{\left(1\right)}$ which are presented in (\ref{eq:cn0},\ref{eq:cn1}).

In the next section, the entropy problem will be studied. It will
be shown that the proliferation of terms in the expansion is at most
exponential.

\section{\label{sec:The-entropy-problem}The entropy problem}

Since the time dependence of all orders is bounded (excluding the
secular terms), we can bound each order of the expansion by\begin{align}
\left\vert c_{n}^{\left(0\right)}\right\vert  & =\delta_{n0}\label{eq:first_bounds}\\
\left\vert c_{n}^{\left(1\right)}\right\vert  & =\left\vert V_{n}^{000}\left(\frac{1-e^{i\left(E_{n}'-E_{0}'\right)t}}{E_{n}'-E_{0}'}\right)\right\vert \leq2\left\vert \frac{V_{n}^{000}}{E_{n}'-E_{0}'}\right\vert \nonumber \\
\left\vert c_{n}^{\left(2\right)}\right\vert  & =\left\vert \sum_{m}\int_{0}^{t}dt'\left(\frac{V_{n}^{m00}V_{m}^{000}}{E_{m}'-E_{0}'}\right)\left(e^{i\left(E_{n}'+E_{m}'-2E_{0}'\right)t'}-3e^{i\left(E_{n}'-E_{0}'\right)t'}+2e^{i\left(E_{n}'-E_{m}'\right)t'}\right)\right\vert \nonumber \\
 & \leq2\cdot\sum\limits _{m}\frac{\left\vert V_{n}^{m00}\right\vert \left\vert V_{m}^{000}\right\vert }{\left\vert E_{m}'-E_{0}'\right\vert }\left(\frac{1}{\left\vert \left(E_{n}'+E_{m}'-2E_{0}'\right)\right\vert }+\frac{3}{\left\vert E_{n}'-E_{0}'\right\vert }+\frac{2}{\left\vert E_{n}'-E_{m}'\right\vert }\right)\nonumber \end{align}
 et cetera. However, for convergence for a finite but possibly small
$\beta$, it is essential that the number of terms on the RHS of (\ref{eq:first_bounds})
will not increase faster than exponentially in $k,$ e.g. not as $k!,$
where $k$ is the expansion order. Next we will show that the number
of terms indeed increases at most exponentially in $k.$

We will designate the number of different products of order $k$ of
$V^{\prime}s$ by $R_{k}$ (on top of it there is still a number of
non vanishing terms in the sums over $m,$ that will be estimated
in the next section). By replacing each $c_{n}^{\left(l\right)}$
in \eqref{eq:a2} by $R_{l}$ (the integration with respect to time
multiples the number of terms by a factor of 2, cf. \eqref{eq:first_bounds})
we deduce a recursive expression for $R_{k}$\begin{equation}
R_{k}=\sum_{r=0}^{k-1}\sum_{l=0}^{k-1-r}R_{r}R_{l}R_{k-1-r-l}\qquad R_{1}=1\qquad R_{0}=1.\label{eq:recurrence}\end{equation}

In order to find an upper bound on $R_{k}$ we examine the structure
of the products of $V^{\prime}s$ we notice that each product could
be uniquely labeled by a vector of zeros and $m_{i}^{\prime}s$\begin{equation}
V_{n}^{m_{1}00}V_{m_{1}}^{m_{2}m_{3}0}\cdots V_{m_{k-2}}^{0m_{k-1}0}V_{m_{k-1}}^{000}\rightarrow\left\{ m_{1},0,0,m_{2},m_{3},0,\cdots,0,m_{k-1},0,0,0,0\right\} ,\end{equation}
 where the number of different summation indices $m_{i}$ is $\left(k-1\right)$
and the length of the labeling vector is $3k.$ Since in each vector
the last three elements should always be zeros the number of different
configurations of this product is the number of ways to distribute
$\left(k-1\right)$ $m^{\prime}s$ in $3\left(k-1\right)$ cells (superscripts),
namely, $R_{k}<\binom{3\left(k-1\right)}{k-1}.$ This is only an upper
bound, since there may be some additional constraints, for example,
the first three elements in the vector should never be all zeros.
Subtracting the cases when all three first elements are zero $\binom{3\left(k-2\right)}{k-1}$
we obtain the bound\begin{equation}
R_{k}<\binom{3\left(k-1\right)}{k-1}-\binom{3\left(k-2\right)}{k-1}.\end{equation}
 This bound has the following asymptotics\begin{equation}
\lim_{k\rightarrow\infty}\frac{1}{k}\ln\left[\binom{3k-3}{k-1}-\binom{3k-6}{k-1}\right]=\ln\frac{27}{4},\end{equation}
 namely,\begin{equation}
R_{k}\leq e^{k\ln\frac{27}{4}}\leq e^{2k}.\label{eq:recur_bound}\end{equation}
 From Fig.\ref{fig:rn} we conclude that this bound is a tight bound
of the exact solution of the recurrence relation.%
\begin{figure}[tbh]
\begin{centering}
\includegraphics[width=5.0918cm,height=5.0918cm]{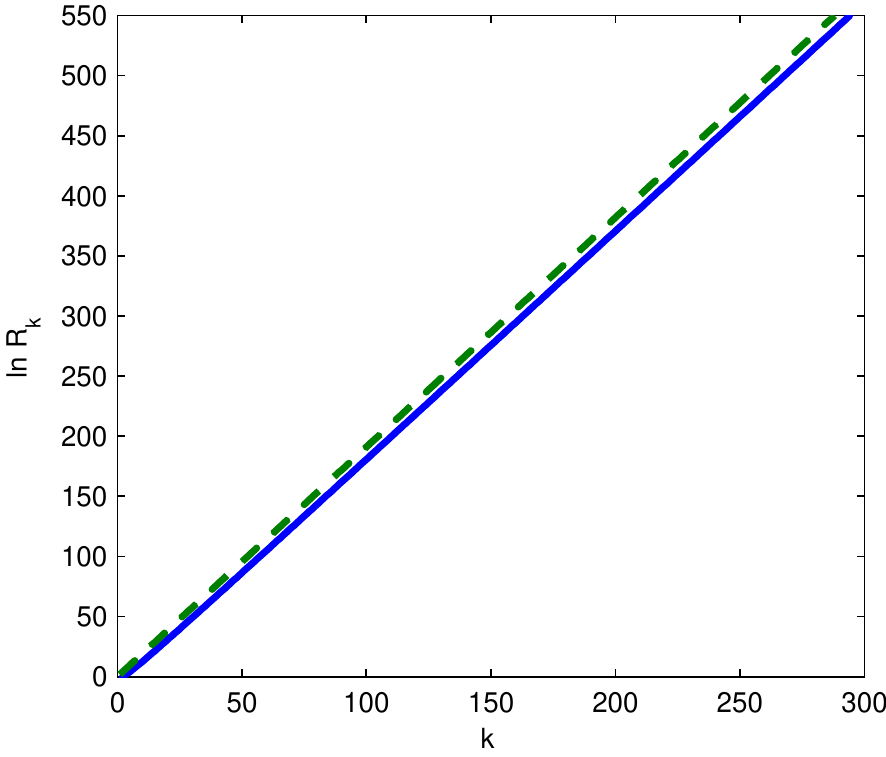}
\par\end{centering}

\caption{The solid line denotes the exact numerical solution of the recurrence
relation for $R_{k}$ and dashed line is the asymptotic upper bound
on this solution $\left(e^{2k}\right).$}

\centering{}\label{fig:rn}
\end{figure}
 This bound shows that the number of terms in the expansion increases
at most exponentially in $k$ and therefore there is no entropy problem.

\section{\label{sec:Bounding-the-general}Bounding the general term}

As clear from (\ref{eq:a2}) after the subtraction of all the secular
terms in the preceding orders the differential equation for the $k$-th
order term is\begin{align}
i\partial_{t}c_{n}^{\left(k\right)} & =-\sum_{s=0}^{k-1}E_{n}^{\left(k-s\right)}c_{n}^{\left(s\right)}+\label{eq:order_N}\\
 & +\sum_{m_{1}m_{2}m_{3}}V_{n}^{m_{1}m_{2}m_{3}}\left[\sum_{r=0}^{k-1}\sum_{s=0}^{k-1-r}\sum_{l=0}^{k-1-r-s}c_{m_{1}}^{\left(r\right)\ast}c_{m_{2}}^{\left(s\right)}c_{m_{3}}^{\left(l\right)}\right]e^{i\left(E_{n}^{\prime}+E_{m_{1}}^{\prime}-E_{m_{2}}^{\prime}-E_{m_{3}}^{\prime}\right)t}.\nonumber \end{align}
with the initial condition of $c_{n}^{\left(0\right)}=\delta_{n0}$
and the first term on the RHS is designed (see Section \ref{sec:Elimination-of-secular})
to eliminate all the time-independent part of of RHS of \eqref{eq:order_N}.
Following the construction of the lower order terms in the preceding
section by a repeated application of (\ref{eq:order_N}) the structure
of the general term in the expression for a given order $k$ can be
obtained. Note that the structure $E_{n}^{\left(l\right)}$ is similar
to the structure of $c_{n}^{\left(l\right)}$. The main blocks of
the structure take the form\begin{equation}
\zeta_{n}^{m_{1}m_{2}m_{3}}\equiv\frac{V_{n}^{m_{1}m_{2}m_{3}}}{E_{n}^{\prime}-\left\{ E^{\prime}\right\} _{m_{i}}}\label{eq:zeta}\end{equation}
where $\left\{ E^{\prime}\right\} _{m_{i}}$ denotes a sum of eigenenergies
(shifted so that the secular terms are removed, see \eqref{eq:En_expand})
that may depend on the summation indices $m_{i}.$ Then any term of
order $k$ is a product of $k$ factors of the form (\ref{eq:zeta})
and $k-1$ summations over the indices $m_{i}$\begin{equation}
\overset{k\text{ terms}}{\overbrace{\zeta_{n}^{m_{1}m_{2}m_{3}}\zeta_{m_{1}}^{m_{4}m_{5}m_{6}}\cdots\zeta_{m_{5}}^{000}\zeta_{m_{6}}^{000}}}\end{equation}
and following the last section there is an exponentially increasing
(in $k$) number of such terms. In order to bound the general term
of order $k$ we will first bound one typical block, namely,\begin{equation}
\zeta_{n}^{m_{1}m_{2}m_{3}}\equiv\frac{V_{n}^{m_{1}m_{2}m_{3}}}{E_{n}^{\prime}-\left\{ E^{\prime}\right\} _{m_{i}}}\label{eq:zeta_main_block}\end{equation}
where $\left\{ E^{\prime}\right\} _{m_{i}}$ is some sum of $E_{j}^{\prime}.$
To bound \eqref{eq:zeta_main_block} we will bound separately the
denominator and the numerator. Using Cauchy-Schwarz inequality,\begin{equation}
\left\langle \left|\zeta_{n}^{m_{1}m_{2}m_{3}}\right|^{s}\right\rangle \leq\left\langle \frac{1}{\left|E_{n}^{\prime}-\left\{ E^{\prime}\right\} _{m_{i}}\right|^{2s}}\right\rangle ^{1/2}\left\langle \left|V_{n}^{m_{1}m_{2}m_{3}}\right|^{2s}\right\rangle ^{1/2},\label{eq:zeta_cauchy_shwarz}\end{equation}
where $0<s<\frac{1}{2}$.
\begin{conjecture}
\label{con:joint_dist}For the Anderson model, which is given by the
linear part of \eqref{eq:NLSE}, the joint distribution of $R$ eigenenergies
is bounded, \[
p\left(E_{1},E_{2},\ldots,E_{R}\right)\leq\bar{D}_{R}\]
 where $\bar{D}_{R}\propto R!<\infty$.
\end{conjecture}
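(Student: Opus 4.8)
The plan is to recast the claim as a bound on the joint intensity of a prescribed $R$-tuple of eigenvalues, using the site-assignment of Definition \ref{def:assignment_to_sites} to label the levels. First I would pass to a finite box $\Lambda=\left\{ -L,\dots,L\right\}$, restrict $H_{0}$ to $\Lambda$, prove the bound uniformly in $L$, and pass to the limit $L\to\infty$. The quantity we actually need is the joint density $p\left(E_{1},\dots,E_{R}\right)$ of $R$ \emph{specific} eigenvalues, namely those assigned to $R$ distinct sites $x_{1},\dots,x_{R}$ (these are the levels that enter a given block $\zeta_{n}^{m_{1}m_{2}m_{3}}$). Because the sites are distinct and the eigenfunctions are exponentially localized around them by \eqref{eq:loc_states}, these $R$ levels are governed by essentially disjoint pieces of the randomness, which is the structural fact that should make the joint density bounded and finite.

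The natural engine is a change of variables from the on-site energies at the chosen sites to the levels. By first order perturbation theory (Hellmann--Feynman),
\begin{equation}
\frac{\partial E_{a}}{\partial\varepsilon_{x_{b}}}=\left|u_{a}\left(x_{b}\right)\right|^{2}\equiv M_{ab},
\end{equation}
so, conditioning on the potential off $\left\{ x_{1},\dots,x_{R}\right\}$, the map $\left(\varepsilon_{x_{1}},\dots,\varepsilon_{x_{R}}\right)\mapsto\left(E_{1},\dots,E_{R}\right)$ has Jacobian $\det M$, giving
\begin{equation}
p\left(E_{1},\dots,E_{R}\right)=\Bigl\langle\sum_{\text{preimages}}\frac{\prod_{a}\mu\left(\varepsilon_{x_{a}}\right)}{\left|\det M\right|}\Bigr\rangle.
\end{equation}
In the localized regime $M$ is diagonally dominant: the diagonal entries $\left|u_{a}\left(x_{a}\right)\right|^{2}$ are bounded below by an inverse participation ratio of order $\gamma$, while the off-diagonal entries $\left|u_{a}\left(x_{b}\right)\right|^{2}$ are exponentially small in $\left|x_{a}-x_{b}\right|$. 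Hence $\det M\approx\prod_{a}\left|u_{a}\left(x_{a}\right)\right|^{2}$ is bounded away from zero and $p$ is controlled by $\prod_{a}\left\Vert \mu\right\Vert _{\infty}$ times a constant power of the localization length; this heuristic even suggests the sharper form $\bar{D}_{R}\lesssim C^{R}$, with the stated $\bar{D}_{R}\propto R!$ serving as a safe margin.

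The rigorous core is spectral averaging. Since $\mu$ is bounded, integrating a diagonal resolvent element over a single on-site energy gives the Wegner bound
\begin{equation}
\int\left\langle \delta_{j},\mathrm{Im}\left(H_{0}-E-i0\right)^{-1}\delta_{j}\right\rangle \mu\left(\varepsilon_{j}\right)d\varepsilon_{j}\leq\pi\left\Vert \mu\right\Vert _{\infty},
\end{equation}
which already yields the bounded one-point density ($R=1$). For general $R$ I would iterate this over the $R$ distinct sites and invoke the generalized eigenvalue-counting (Minami--Combes--Germinet--Klein) estimate, bounding the $R$-th factorial moment of the counting function $N_{I}$ in a window $I$ by $\bigl(C\left\Vert \mu\right\Vert _{\infty}\left|I\right|\bigr)^{R}$ (per site carrying randomness), from which the joint density inherits a pointwise bound. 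This is exactly the information needed to make the Cauchy--Schwarz estimate \eqref{eq:zeta_cauchy_shwarz} finite, since a bounded joint density keeps $\bigl\langle\left|E_{n}^{\prime}-\left\{ E^{\prime}\right\} _{m_{i}}\right|^{-2s}\bigr\rangle$ integrable for $2s<1$.

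The hard part will be the degenerate case in which this picture collapses: when two or more of the chosen levels are nearly resonant their eigenfunctions hybridize, $\det M\to0$, and the change of variables blows up. Controlling the contribution of this resonant set is precisely a statement of level repulsion, and it is what separates the easy $R=1$ case (a single rank-one spectral averaging, Wegner) from $R\geq2$ (Minami, and its rank-$R$ generalization), where one must produce a determinantal bound on the averaged product of imaginary parts of the Green's function; the combinatorics of that bound, together with the sum over which subsets of levels can hybridize, is where the factor $R!$ genuinely enters. The remaining technical obstacles are the uniformity of the constant in $\left|\Lambda\right|$ (so that $L\to\infty$ survives) and ensuring enough sites carry genuine bounded-density randomness, both of which are supplied by the i.i.d., bounded, finite-support, symmetric hypotheses on $\mu$. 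Because a complete rank-$R$ Minami estimate with the sharp $R!$ dependence is delicate, we retain this statement as a conjecture and test it numerically rather than prove it.
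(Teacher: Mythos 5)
The first thing to say is that the paper does not prove this statement either: it is explicitly a conjecture, and what the paper offers in place of a proof is a heuristic justification. It cites Theorem (3.1) of Aizenman--Warzel \cite{Aizenman2008}, which bounds the joint density of $R$ levels by $\bar{D}_{R}\propto R!/\alpha^{R}$ \emph{provided} the squared eigenfunction profiles of those levels are substantially different (the quantity $\alpha$ bounded away from zero), and then argues that this distinctness is generic for the Anderson model because distinct localized eigenfunctions live in different regions and are driven by different on-site energies, the dangerous exceptions (double-humped, nearly symmetric/antisymmetric pairs) being rare and their profile differences exponentially small only in the hump separation. Your proposal reconstructs essentially the same mechanism rather than citing it: your Hellmann--Feynman change of variables with Jacobian $\det M$, $M_{ab}=\left|u_{a}\left(x_{b}\right)\right|^{2}$, is exactly the engine behind the Aizenman--Warzel bound (condition on the potential away from the $R$ chosen sites, map the site energies to the levels), and your degenerate case $\det M\to0$ under hybridization of nearly resonant levels is precisely the paper's profile-distinctness assumption in different words; you also reach the same honest endpoint, namely that the statement must remain a conjecture tested numerically. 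Where you genuinely differ is the suggested rigorous route via iterated Wegner averaging and a rank-$R$ Minami-type estimate, and there you should be careful: Minami-type bounds control factorial moments of the eigenvalue \emph{counting function} in an interval, i.e. unlabeled level statistics, whereas what enters the Cauchy--Schwarz step \eqref{eq:zeta_cauchy_shwarz} is the joint density of $R$ \emph{specific} levels, labeled by the site assignment of Definition \ref{def:assignment_to_sites}; converting counting moments into a pointwise bound on this labeled density is essentially the same unproven distinctness/level-repulsion statement you flagged, not a corollary of known Minami estimates. One further point your labeling viewpoint actually clarifies: as the paper remarks, once levels are assigned to specific sites the $R!$ (which in Aizenman--Warzel counts the permutations of which level sits where) should be droppable, leaving $\bar{D}_{R}\propto\alpha^{-R}$, which matches your heuristic $C^{R}$ expectation.
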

The conjecture is inspired by Theorem (3.1) of the recent paper by
Aizenman and Warzel \cite{Aizenman2008}. If \emph{one assumes} that
with probability one the profiles of the eigenfunctions, namely, the
squares of the eigenfunctions, which correspond to the eigenenergies
$\left\{ E_{i}\right\} _{i=1}^{R}$ are substantially different such
that $\alpha$ (as defined in Theorem (3.1) of \cite{Aizenman2008})
is bounded away from zero, than taking the intervals $I_{j}=dE_{j}$
one finds that the joint probability density can be bounded by $\bar{D}_{R}\propto\frac{R!}{\alpha^{R}}$.
It is not known how to prove that for the Anderson model the profiles
of the wave functions are distnict and how to quantify this. However,
it is reasonable to assume distinctness since different eigenfunctions
are localized in different regions and therefore are affected by different
potentials. There are double humped states (consisting of nearly symmetric
and antisymmetric combinations of two humps), which have approximately
the same squares, and therefore are natural candidates for states
that may result in violation of the conjecture. Nevertheless, those
states are very rare and the difference between their squares is exponential
in the distance between the humps. For this it is crucial that many
sites are invloved (therefore the counter example (2.1) of \cite{Aizenman2008}
is not generic). If the energies are assigned to specific locations
than the factorial term could be dropped, namely, $\bar{D}_{R}\propto\alpha^{-R}$.
This is due to the fact that specific assignment of energies chooses
one of the $R!$ permutations, mentioned in \cite{Aizenman2008}.
\begin{cor}
\label{cor:main_theorem}Given $0<s<1$, for $f={\displaystyle \sum\limits _{k=1}^{R}}c_{k}E_{i_{k}},$
where $c_{k}$ are integers (and the assignment of eigenfunctions
to sites is given by Definition \ref{def:assignment_to_sites}) the
following mean is bounded from above\begin{equation}
\left\langle \frac{1}{\left\vert f\right\vert ^{s}}\right\rangle \leq D_{R}<\infty.\end{equation}
 where $D_{R}\propto\bar{D}_{R}$.\end{cor}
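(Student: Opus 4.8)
The plan is to read the average as an integral against the joint probability density of the $R$ eigenenergies and to exploit the fact that for $0<s<1$ the one–dimensional singularity $|f|^{-s}$ is integrable. Writing the average out,
\begin{equation}
\left\langle \frac{1}{\left|f\right|^{s}}\right\rangle =\int\frac{1}{\left|\sum_{k=1}^{R}c_{k}E_{i_{k}}\right|^{s}}\,p\left(E_{i_{1}},\ldots,E_{i_{R}}\right)\,dE_{i_{1}}\cdots dE_{i_{R}},
\end{equation}
I would first observe that, since $\mu$ has finite support, the spectrum of $H_{0}$ lies in a compact interval $[-B,B]$, so the integral is really over the box $[-B,B]^{R}$. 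The site assignment of Definition \ref{def:assignment_to_sites} guarantees that the $E_{i_{k}}$ are well-defined (almost surely distinct) random variables, so that Conjecture \ref{con:joint_dist} applies and gives $p\leq\bar{D}_{R}<\infty$ on this box.

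The key step is a linear change of variables that promotes $f$ to a coordinate. Because $f$ is a nontrivial combination, there is an index $k_{0}$ with $c_{k_{0}}\neq0$ (and, being a nonzero integer, $\left|c_{k_{0}}\right|\geq1$). I would replace $E_{i_{k_{0}}}$ by the new variable $g\equiv f=\sum_{k}c_{k}E_{i_{k}}$, holding the remaining energies fixed. The Jacobian is the constant $1/\left|c_{k_{0}}\right|$, and the pointwise bound on the density is preserved under the substitution, so
\begin{equation}
\left\langle \frac{1}{\left|f\right|^{s}}\right\rangle \leq\frac{\bar{D}_{R}}{\left|c_{k_{0}}\right|}\int\frac{dg}{\left|g\right|^{s}}\prod_{k\neq k_{0}}dE_{i_{k}}.
\end{equation}

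The integral now factorizes. The surviving $R-1$ variables each run over $[-B,B]$, contributing the finite factor $(2B)^{R-1}$, while the transformed variable obeys $\left|g\right|=\left|f\right|\leq B\sum_{k}\left|c_{k}\right|=:G$, whence
\begin{equation}
\int_{-G}^{G}\frac{dg}{\left|g\right|^{s}}=\frac{2G^{1-s}}{1-s}<\infty,
\end{equation}
precisely because $s<1$. Collecting the constants gives $\left\langle \left|f\right|^{-s}\right\rangle \leq D_{R}$ with $D_{R}\propto\bar{D}_{R}$, as claimed.

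The genuine input here is Conjecture \ref{con:joint_dist}: the entire argument rests on the boundedness of the joint density of the eigenenergies, which is the one unproven ingredient. Everything analytic -- the compactness of the energy range and the integrability of $\left|g\right|^{-s}$ for $s<1$ -- is routine. The only points demanding a little care are to perform the substitution on an index with $c_{k_{0}}\neq0$ and, when several coefficients vanish, to keep correct track of the bounded box for the remaining variables; neither of these constitutes a real obstacle, so I expect the proof to be short once the conjectured density bound is granted.
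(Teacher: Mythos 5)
Your proposal is correct and follows essentially the same route as the paper's own proof: bound the joint density by $\bar{D}_{R}$ via Conjecture \ref{con:joint_dist}, perform the linear change of variables that makes $f$ a coordinate (with constant Jacobian $1/\left|c_{k_{0}}\right|$), and use the finite support of the energies together with the integrability of $\left|f\right|^{-s}$ for $s<1$. Your only departures are cosmetic improvements in bookkeeping, namely explicitly selecting an index with $c_{k_{0}}\neq0$ where the paper implicitly uses $c_{1}$, and writing the bound $\left|f\right|\leq B\sum_{k}\left|c_{k}\right|$ where the paper invokes a maximal value $f_{\infty}$.
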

\begin{proof}
By Conjecture \ref{con:joint_dist},\begin{equation}
\left\langle \frac{1}{\left\vert f\right\vert ^{s}}\right\rangle =\int\frac{p\left(E_{1},E_{2},\ldots,E_{R}\right)dE_{1}dE_{2}\cdots dE_{R}}{\left|{\displaystyle \sum\limits _{k=1}^{R}}c_{k}E_{i_{k}}\right|^{s}}\leq\bar{D}_{R}\int\frac{dE_{1}dE_{2}\cdots dE_{R}}{\left|{\displaystyle \sum\limits _{k=1}^{R}}c_{k}E_{i_{k}}\right|^{s}},\end{equation}
changing the variables to $\left\{ f,E_{2},E_{3},\ldots,E_{R}\right\} $
gives\begin{equation}
\left\langle \frac{1}{\left\vert f\right\vert ^{s}}\right\rangle \leq\bar{D}_{R}\int_{-\Delta}^{\Delta}\frac{dE_{2}\cdots dE_{R}}{\left|c_{1}\right|}\int_{f\left(\vec{E}\right)}df\frac{1}{\left|f\right|^{s}},\end{equation}
where $\left|c_{1}\right|$ is the Jacobian and $2\Delta$ is the
support of the energies. Due to the fact that $f\left(E_{1}\right)$
is linear the multiplicity is one. Since the integrand is positive
we can only increase the integral by increasing the domain of integration
of $f$. Designating by $f_{\infty}$ the maximal value of $f$,\begin{equation}
\left\langle \frac{1}{\left\vert f\right\vert ^{s}}\right\rangle \leq2\bar{D}_{R}\Delta^{R-1}\frac{f_{\infty}^{1-s}}{1-s}\equiv D_{R}.\end{equation}
\end{proof}
\begin{conjecture}
\label{con:central_limit}In the limit of $R\rightarrow\infty$, for
$0<s<1$ and for $f={\displaystyle \sum\limits _{k=1}^{R}}c_{k}E_{i_{k}},$
where $c_{k}$ are integers (and the assignment of eigenfunctions
to sites is given by Definition \ref{def:assignment_to_sites})\begin{equation}
\left\langle \frac{1}{\left\vert f\right\vert ^{s}}\right\rangle \asymp\frac{1}{R^{s/2}}\end{equation}

\end{conjecture}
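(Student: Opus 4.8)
The plan is to treat $f=\sum_{k=1}^{R}c_{k}E_{i_{k}}$ as a sum of a large number of weakly dependent random variables and to invoke a central limit theorem. Because the eigenfunctions are exponentially localized, the eigenenergy $E_{i_{k}}$ assigned to a site $i_{k}$ by Definition \ref{def:assignment_to_sites} is determined, up to exponentially small corrections, by the values of the potential $\{\varepsilon_{x}\}$ in a window of size $O\left(\xi\right)$ about $x_{i_{k}}$. Hence eigenenergies attached to sites that are far apart are essentially independent, the sequence $\{E_{i_{k}}\}$ satisfies a mixing condition sufficient for a central limit theorem, and (assuming the $c_{k}$ are uniformly bounded, as they are for the denominators appearing in the perturbation series) the rescaled sum $f/\sigma_{f}$ converges in law to a standard Gaussian. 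Its variance is
\begin{equation}
\sigma_{f}^{2}=\mathrm{Var}\left(f\right)=\sum_{k,l}c_{k}c_{l}\,\mathrm{Cov}\left(E_{i_{k}},E_{i_{l}}\right)\asymp R,
\end{equation}
since the finite support of $\mu$ makes the spectrum bounded and hence each $\mathrm{Var}\left(E_{i_{k}}\right)$ uniformly bounded, so the diagonal terms contribute $\sum_{k}c_{k}^{2}\,\mathrm{Var}\left(E_{i_{k}}\right)\asymp R$, while the off-diagonal covariances decay in the site separation and are therefore summable and subleading.

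First I would show that $f$ is centered. The gauge transformation $\psi\left(x\right)\mapsto\left(-1\right)^{x}\psi\left(x\right)$ reverses the sign of the hopping term while leaving the diagonal potential untouched, so it conjugates $H_{0}$ with potential $\{\varepsilon_{x}\}$ into $-H_{0}$ with potential $\{-\varepsilon_{x}\}$. Since the hypothesis $\mu\left(\varepsilon_{x}\right)=\mu\left(-\varepsilon_{x}\right)$ makes the laws of $\{\varepsilon_{x}\}$ and $\{-\varepsilon_{x}\}$ coincide, the disorder-averaged density of states is invariant under $E\mapsto-E$. Consequently $\langle E_{i_{k}}\rangle=0$ for each $k$, so $\langle f\rangle=0$ and the limiting Gaussian is centered. (Alternatively, for the resonance denominators actually occurring in \eqref{eq:zeta} the coefficients satisfy $\sum_{k}c_{k}=0$, which already forces $\langle f\rangle=0$.) This centering is what yields the $R^{-s/2}$ decay rather than a constant set by a nonzero mean.

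Next I would reduce the singular moment to a Gaussian integral. Writing $p_{f}$ for the density of $f$ and substituting $y=\sigma_{f}u$,
\begin{equation}
\left\langle \frac{1}{\left\vert f\right\vert ^{s}}\right\rangle =\int\frac{p_{f}\left(y\right)}{\left\vert y\right\vert ^{s}}\,dy=\frac{1}{\sigma_{f}^{s}}\int\frac{\sigma_{f}\,p_{f}\left(\sigma_{f}u\right)}{\left\vert u\right\vert ^{s}}\,du.
\end{equation}
If a \emph{local} central limit theorem holds, so that $\sigma_{f}\,p_{f}\left(\sigma_{f}u\right)\to\phi\left(u\right)$ with $\phi$ the standard normal density, then the remaining integral tends to $\langle\vert Z\vert^{-s}\rangle=\int\phi\left(u\right)\vert u\vert^{-s}\,du$, which is finite for $0<s<1$ because $\phi$ is bounded and $\vert u\vert^{-s}$ is integrable at the origin. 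This gives $\langle\vert f\vert^{-s}\rangle\asymp\sigma_{f}^{-s}\asymp R^{-s/2}$, as claimed.

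The hard part will be upgrading the convergence in law supplied by the ordinary central limit theorem into control of the \emph{singular} moment $\langle\vert f\vert^{-s}\rangle$: since $\vert u\vert^{-s}$ is unbounded at the origin, convergence in distribution alone does not transfer to this statistic. What is needed is (i) a local central limit theorem, i.e. pointwise convergence of the rescaled density near $u=0$, and (ii) a uniform-integrability bound ensuring that the probability mass of $f$ near the origin does not concentrate more strongly than the Gaussian prediction. Ingredient (ii) is precisely where Conjecture \ref{con:joint_dist} and Corollary \ref{cor:main_theorem} enter: the bound $\langle\vert f\vert^{-s}\rangle\le D_{R}<\infty$ rules out any anomalous accumulation of probability at $f=0$ for each finite $R$, which is the input allowing the limit $R\to\infty$ to be exchanged with the singular integral; note, however, that this crude bound grows with $R$ and so does \emph{not} by itself reveal the rate---the sharp $R^{-s/2}$ behaviour comes entirely from the local central limit theorem. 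Establishing that local theorem for the weakly dependent, non-identically distributed energies $E_{i_{k}}$---in particular verifying the mixing rate and a Cram\'er-type smoothness (non-lattice) condition on the characteristic function of $f$---is the principal technical obstacle.
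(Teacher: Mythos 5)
Your proposal follows essentially the same route as the paper's own justification of this conjecture: use the effective independence of eigenenergies attached to well-separated localization centers to invoke a central limit theorem, conclude that $f$ is approximately Gaussian with $\langle f\rangle=0$ and $\langle f^{2}\rangle\propto R$, and then evaluate $\langle\left\vert f\right\vert^{-s}\rangle$ against the Gaussian density to obtain the $R^{-s/2}$ scaling. Your extra material --- the gauge-symmetry argument ($\psi(x)\mapsto(-1)^{x}\psi(x)$ together with $\mu(\varepsilon_x)=\mu(-\varepsilon_x)$) for centering, the variance decomposition, and the observation that a \emph{local} central limit theorem plus uniform integrability near $f=0$ are what a rigorous proof would require --- is a careful elaboration of steps the paper simply asserts, not a different method.
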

For large $R$ the sum, $f={\displaystyle \sum\limits _{k=1}^{R}}c_{k}E_{i_{k}},$
can be effectively separated into groups of terms that depend on different
diagonal energies, $\varepsilon_{j}$. Therefore by the central limit
theorem, $f$ is effectively a Gaussian variable with $\left\langle f\right\rangle =0$
and $\left\langle f^{2}\right\rangle =\sigma^{2}R$, where $\sigma^{2}$
is some constant. Therefore,\begin{equation}
\left\langle \frac{1}{\left\vert f\right\vert ^{s}}\right\rangle \asymp\frac{2}{\sqrt{2\pi\sigma^{2}R}}\int_{0}^{\infty}\frac{df}{f^{s}}e^{-f^{2}/(2\sigma^{2}R)}=\frac{2}{\sqrt{2\pi}\left(\sqrt{\sigma^{2}R}\right)^{s}}\int_{0}^{\infty}\frac{df}{f^{s}}e^{-f^{2}/2}\asymp R^{-\frac{s}{2}}.\end{equation}
Conjecture \ref{con:joint_dist} and Corollary \ref{cor:main_theorem}
were tested numerically for lattice size 128, $s=\frac{1}{2}$ and
the uniform distribution\begin{equation}
\mu\left(\varepsilon\left(x\right)\right)=\begin{cases}
\frac{1}{2\Delta} & \left|\varepsilon\left(x\right)\right|\leq\Delta\\
0 & \left|\varepsilon\left(x\right)\right|>\Delta\end{cases},\label{eq:uniform_dist}\end{equation}
with $\Delta=1$. The results are presented in Fig. \ref{fig:mean_of_f}
for $R\leq10$. For $R\leq3$ all the combinations of the energies
were used and the result is an average over all the combinations.
For $R\geq4$ only a partial set of combinations of cardinality $10^{4}$,
chosen at random was used. For large $R$ the decay is as $R^{-s/2}$
in agreement with Conjecture \ref{con:central_limit}. The above calculation
was repeated for the case where the $E_{i}$ are replaced by the renormalized
energies $E_{i}'$. The calculation can be performed only to the order
$\beta^{2}$ with the help of \eqref{eq:en_oder2}. The results are
also presented in Fig. \ref{fig:mean_of_f} for $\beta=0.1$ and $\beta=1$.
\begin{conjecture}
\label{con:e_prime}Corollary \ref{cor:main_theorem} and Conjecture
\ref{con:central_limit} hold also if the $E_{i}$ are replaced by
the renormalized energies $E_{i}'$.
\end{conjecture}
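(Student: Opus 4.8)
The plan is to show that the renormalized energies $E_n'$ differ from the bare energies $E_n$ by a shift that is small in $\beta$ and, on a set of realizations of measure $1-\delta$, both bounded and controllable, so that the change-of-variables argument used in the proof of Corollary \ref{cor:main_theorem} survives the substitution $E_i \to E_i'$. Writing the renormalization series \eqref{eq:En_expand} as $E_n' = E_n + \beta g_n$, where $g_n = E_n^{(1)} + \beta E_n^{(2)} + \cdots$ is $O(1)$ in $\beta$ by the construction of Section \ref{sec:Elimination-of-secular}, I would first record the structural fact that the leading shift $E_n^{(1)} = (2-\delta_{n0})V_n^{n00}$, and likewise the higher shifts such as $E_0^{(2)}$ in \eqref{eq:en_oder2}, are built from overlaps $V$ that, by \eqref{eq:overlap}, are exponentially small unless $x_n$ lies within a few localization lengths of the origin. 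Hence $g_n$ is negligible for all but a bounded number of near-origin states, and for $f' = \sum_k c_k E_{i_k}'$ one has $f' = f + \beta G$ with $G = \sum_k c_k g_{i_k}$ a bounded random variable on the good realization set. Since the expansion is needed only to the finite order $N$ actually used, only finitely many shift terms enter, each bounded on that set.

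For the Corollary \ref{cor:main_theorem} analog, $\langle|f'|^{-s}\rangle \le D_R' < \infty$, I would keep Conjecture \ref{con:joint_dist}'s bound on the density of the \emph{bare} energies and simply repeat the change of variables in the proof of Corollary \ref{cor:main_theorem}, now passing from $E_1$ to $f'$ while leaving $E_2,\dots,E_R$ as bare variables. The Jacobian is $\partial f'/\partial E_1 = c_1 + \beta\,\partial G/\partial E_1$; provided $\partial G/\partial E_1$ is bounded on the good set, for $\beta$ small enough it stays bounded below by $|c_1|/2$ and $f'$ is monotone in $E_1$, so the multiplicity is again one and the range of $f'$ is bounded. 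The bound then follows exactly as before, picking up only an extra factor $2/|c_1|$ from the perturbed Jacobian and an $O(\beta)$-shifted value of $f_\infty$.

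For the Conjecture \ref{con:central_limit} analog I would exploit that the shift touches only a bounded number of energies: since $g_{i_k}$ is non-negligible only when $i_k$ is near the origin, for large $R$ the sum $f'$ is still an $R$-term sum to which the central limit theorem applies, with $\langle f'\rangle = O(\beta)$ and $\langle f'^2\rangle = \sigma^2 R\,(1+O(\beta))$ dominated by the unshifted bulk. The Gaussian computation of Conjecture \ref{con:central_limit} then reproduces the scaling $\langle|f'|^{-s}\rangle \asymp R^{-s/2}$.

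The hard part will be controlling $\partial G/\partial E_1$, because the higher-order shifts contain small-denominator factors such as $1/(E_m'-E_0')$ whose derivatives blow up at near-collisions of energies, so a naive pointwise Jacobian bound fails there. I would handle this by a bootstrap: on the event, of measure $1-\delta$ by the lower-order version of the present corollary, that every denominator entering the renormalization up to order $N$ is bounded below, the shifts and their $E_1$-derivatives are bounded, which closes the change-of-variables estimate; the same event is where the self-consistent definition of the $E_i'$ -- note that \eqref{eq:en_oder2} carries $E'$ on both sides -- is well posed. Making this bootstrap quantitative, i.e. tracking how the required smallness of $\beta$ and the constant $D_R'$ depend on $R$, is the delicate point, and is presumably why the statement is offered as a conjecture tested numerically rather than proved.
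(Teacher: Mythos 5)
Your proposal attempts something the paper itself does not do: in the paper this statement is left as a conjecture, supported only by a one-sentence heuristic (each renormalized energy $E_i'$ is dominated by a \emph{different} independent on-site variable $\varepsilon_i$, so linear combinations should remain anti-concentrated) and by the numerical test in Fig. \ref{fig:mean_of_f}, where the $E_i'$ are computed to order $\beta^2$ from \eqref{eq:en_oder2}. So there is no paper proof to match; the question is whether your sketch could close the gap, and it cannot as written.

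The central flaw is in your change-of-variables step. Corollary \ref{cor:main_theorem} works because $f=\sum_k c_k E_{i_k}$ is a \emph{linear function of exactly the variables being integrated}, whose joint density Conjecture \ref{con:joint_dist} bounds. After the replacement $E_i\to E_i'$, the shift $G=\sum_k c_k g_{i_k}$ is not a function of $(E_1,\dots,E_R)$ at all: the $g_n$ are built from overlap sums $V_n^{m_1m_2m_3}$, which depend on the eigenfunctions (hence on the entire realization $\omega$), and from denominators $E_m'-E_0'$ involving energies \emph{outside} the collection $\{E_{i_k}\}$. Consequently "$\partial G/\partial E_1$ with the other variables held fixed" is not a defined object in the probability space that Conjecture \ref{con:joint_dist} controls; to run your perturbed-Jacobian argument you would need a bound on the conditional density of $E_1$ given the eigenfunctions and all the additional spectral data entering $g$, which is a strictly stronger input than anything assumed in the paper, and is precisely what is unknown. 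Smallness of $\beta$ does not repair this, because the issue is the identity of the integration variables, not the size of the perturbation. Your proposed bootstrap for the small denominators inside $g_n$ has a second, related circularity that you partly acknowledge: since \eqref{eq:en_oder2} carries $E'$ on both sides, lower-bounding those denominators with high probability \emph{is} the statement under proof, and excising the bad set modifies the measure in a way that must be tracked through the density bound, not just through the Jacobian. By contrast, your treatment of the Conjecture \ref{con:central_limit} analogue (only boundedly many near-origin energies receive non-negligible shifts, so the CLT bulk is untouched) is sound as a heuristic and is essentially a sharpened version of the independence argument the paper itself offers.
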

The reason is that the various renormalized energies are dominated
by different independent random variables $\varepsilon_{i}$. The
numerical calculations support this point of view. In what follows
Corollary \ref{cor:main_theorem} and Conjecture \ref{con:e_prime}
(and not Conjecture \ref{con:joint_dist}) are used, and these were
tested numerically (Fig. \ref{fig:mean_of_f}).%
\begin{figure}
\includegraphics[width=0.95\textwidth]{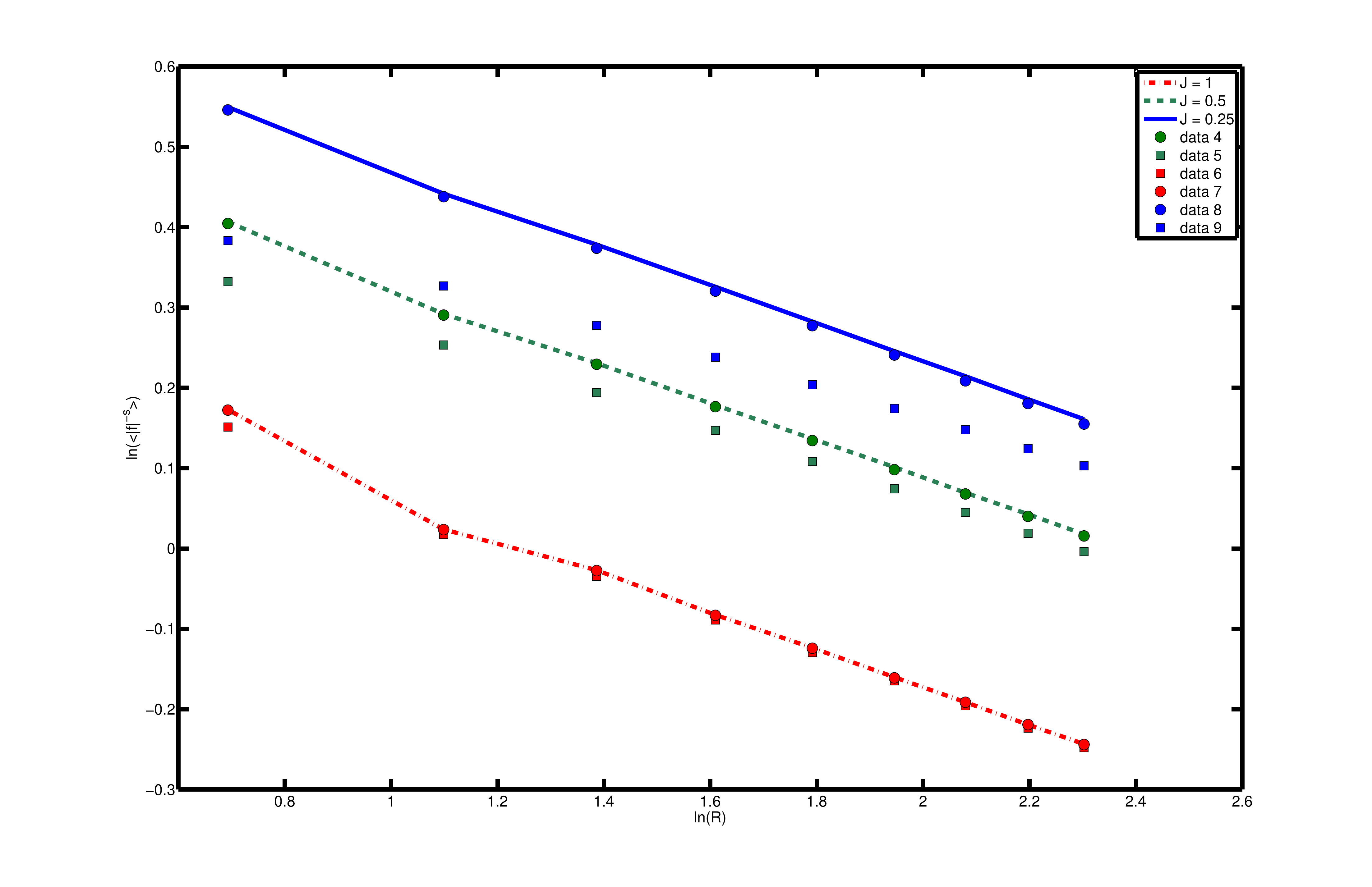}\caption{\label{fig:mean_of_f} The logarithm of $\left\langle \left|f\right|^{-1/2}\right\rangle $
as a function of the logarithm of $R$. The lines designate denominators
with $\beta=0$, with the solid line (blue) is for $J=0.25$ the dashed
line (green) is for $J=0.5$ and the dot-dashed line (red) is for
$J=1$. The solid circles and the squares are data with $\beta=1$,
and $E_{n}'$ calculated up to the second in $\beta$, such that different
colors represent different $J$, in the similar manner as for the
lines. The solid squares are for parameters similar to the ones with
the solid circles, but with the restriction that at least one of the
states that corresponds to $E_{n}'$ which is localized near the origin.}

\end{figure}

Using the bound on the overlap sum \eqref{eq:overlap} and Corollary
\ref{cor:main_theorem},\begin{equation}
\left\langle \left\vert \zeta_{n}^{m_{1}m_{2}m_{3}}\right\vert ^{s}\right\rangle _{\delta}\leq D_{\delta}\left|V_{\delta}^{\varepsilon,\varepsilon^{\prime}}\right|^{s}e^{\varepsilon s\left(\left\vert x_{n}\right\vert +\left\vert x_{m_{1}}\right\vert +\left\vert x_{m_{2}}\right\vert +\left\vert x_{m_{3}}\right\vert \right)}e^{-\frac{1}{3}\left(\gamma-\varepsilon^{\prime}\right)s\left(\left\vert x_{n}-x_{m_{1}}\right\vert +\left\vert x_{n}-x_{m_{2}}\right\vert +\left\vert x_{n}-x_{m_{3}}\right\vert \right)}.\end{equation}
this proves the proposition:
\begin{prop}
\label{pro:zeta_bound} For some $\delta,\varepsilon,\varepsilon'>0$
and $0<s<\frac{1}{2}$, \begin{equation}
\left\langle \left\vert \zeta_{n}^{m_{1}m_{2}m_{3}}\right\vert ^{s}\right\rangle _{\delta}\leq F'e^{\varepsilon s\left(\left\vert x_{n}\right\vert +\sum_{i}\left\vert x_{m_{i}}\right\vert \right)}e^{-\frac{1}{3}\left(\gamma-\varepsilon^{\prime}\right)s\sum_{i}\left\vert x_{n}-x_{m_{i}}\right\vert }\label{eq:av_zeta_bound}\end{equation}
where $F^{\prime}=D_{\delta}\left|V_{\delta}^{\varepsilon,\varepsilon^{\prime}}\right|^{s}$.
\end{prop}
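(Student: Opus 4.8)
The plan is to bound a single block $\zeta_n^{m_1m_2m_3}$ of \eqref{eq:zeta_main_block} by treating its random numerator and its random denominator separately, following the decomposition already written in \eqref{eq:zeta_cauchy_shwarz}. The first step is to apply the Cauchy--Schwarz inequality in that form, which factorizes the $s$-th moment of $\zeta_n^{m_1m_2m_3}$ into the product $\langle|E_n'-\{E'\}_{m_i}|^{-2s}\rangle^{1/2}\langle|V_n^{m_1m_2m_3}|^{2s}\rangle^{1/2}$. The restriction $0<s<\frac12$ in the statement is exactly what is needed here, since it guarantees $2s<1$ so that the inverse-energy moment lies within the range of validity of Corollary \ref{cor:main_theorem}.

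For the denominator factor I would note that $E_n'-\{E'\}_{m_i}$ is a linear combination $\sum_k c_k E_{i_k}'$ of renormalized energies with integer coefficients $c_k$, the $R$ distinct energies entering being finite in number. Corollary \ref{cor:main_theorem}, applied to the renormalized energies through Conjecture \ref{con:e_prime} and with the assignment of eigenfunctions to sites of Definition \ref{def:assignment_to_sites}, then yields $\langle|E_n'-\{E'\}_{m_i}|^{-2s}\rangle_\delta\le D_R<\infty$, a constant independent of the spatial indices $x_n,x_{m_i}$.

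For the numerator factor I would use the overlap estimate \eqref{eq:overlap}, which holds \emph{deterministically} on the set of realizations of measure $1-\delta$ singled out by \eqref{eq:real_assump}. Since on this set $|V_n^{m_1m_2m_3}|$ is bounded by $V_\delta^{\varepsilon,\varepsilon'}e^{\varepsilon(|x_n|+\sum_i|x_{m_i}|)}e^{-\frac13(\gamma-\varepsilon')\sum_i|x_n-x_{m_i}|}$ with no fluctuation, the $2s$-th moment obeys $\langle|V_n^{m_1m_2m_3}|^{2s}\rangle_\delta^{1/2}\le|V_\delta^{\varepsilon,\varepsilon'}|^s e^{\varepsilon s(|x_n|+\sum_i|x_{m_i}|)}e^{-\frac13(\gamma-\varepsilon')s\sum_i|x_n-x_{m_i}|}$, i.e. the spatial factors of \eqref{eq:overlap} raised to the power $s$. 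Multiplying the two factors and absorbing the finite constant $D_R^{1/2}$ into $D_\delta$ gives \eqref{eq:av_zeta_bound} with $F'=D_\delta|V_\delta^{\varepsilon,\varepsilon'}|^s$.

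The computation itself is a one-line application of Cauchy--Schwarz followed by the two elementary bounds, so the main obstacle is entirely in the justification of the denominator estimate. That estimate rests on Corollary \ref{cor:main_theorem}, and hence on the unproven Conjecture \ref{con:joint_dist}, together with the additional assumption of Conjecture \ref{con:e_prime} that the bound persists when the bare energies $E_i$ are replaced by the renormalized ones $E_i'$. One must also check that the denominator is a genuinely nondegenerate integer combination, so that the coefficient $|c_1|$ appearing as the Jacobian in the proof of Corollary \ref{cor:main_theorem} is nonzero and the change of variables there applies, and that the number $R$ of energies in the combination is under control; once these points are granted the proposition follows immediately.
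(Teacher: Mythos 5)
Your proposal is correct and follows exactly the paper's own proof: the Cauchy--Schwarz factorization \eqref{eq:zeta_cauchy_shwarz}, the bound on the inverse-moment of the denominator via Corollary \ref{cor:main_theorem} (invoked through Conjecture \ref{con:e_prime} for the renormalized energies), and the deterministic overlap bound \eqref{eq:overlap} on the measure-$\left(1-\delta\right)$ set of realizations singled out by \eqref{eq:real_assump}, with the constants combined into $F^{\prime}=D_{\delta}\left|V_{\delta}^{\varepsilon,\varepsilon^{\prime}}\right|^{s}$. Your closing remarks on the nondegeneracy of the integer combination and the control of $R$ are points the paper leaves implicit, but the route is the same.
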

Using the Chebyshev inequality, \begin{equation}
\Pr\left(\left|x\right|\geq a\right)\leq\left\langle \left|x\right|\right\rangle /a,\label{eq:Chebyshev}\end{equation}
where $x$ is a random variable and $a$ is a constant, one finds
\begin{cor}
\label{cor:zeta_bound}\begin{equation}
\Pr\left(\left\vert \zeta_{n}^{m_{1}m_{2}m_{3}}\right\vert \geq F^{\prime1/s}e^{\varepsilon\left(\left\vert x_{n}\right\vert +\sum_{i}\left\vert x_{m_{i}}\right\vert \right)}e^{-\frac{1}{3}\left(\gamma-\varepsilon^{\prime}-\eta\right)\sum_{i}\left\vert x_{n}-x_{m_{i}}\right\vert }\right)\leq e^{-\frac{\eta s}{3}\sum_{i}\left\vert x_{n}-x_{m_{i}}\right\vert }\label{eq:zeta_bound}\end{equation}
where $F^{\prime}=D_{\delta}\left|V_{\delta}^{\varepsilon,\varepsilon^{\prime}}\right|^{s}$.
\end{cor}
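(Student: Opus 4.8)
The plan is to read this as a large-deviation statement that should follow directly from the $s$-th moment bound of Proposition~\ref{pro:zeta_bound} combined with the Chebyshev (Markov) inequality \eqref{eq:Chebyshev}. The one point requiring care is that Proposition~\ref{pro:zeta_bound} controls $\left\langle \left|\zeta_n^{m_1m_2m_3}\right|^s\right\rangle_\delta$ rather than the first moment $\left\langle \left|\zeta_n^{m_1m_2m_3}\right|\right\rangle_\delta$ itself, so I would not apply \eqref{eq:Chebyshev} to $\left|\zeta_n^{m_1m_2m_3}\right|$ directly but rather to the nonnegative random variable $\left|\zeta_n^{m_1m_2m_3}\right|^s$.

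First I would observe that, since $t\mapsto t^{s}$ is strictly increasing on $[0,\infty)$ for $0<s<\tfrac12$, for any threshold $a>0$ the events coincide,
\[
\left\{ \left|\zeta_n^{m_1m_2m_3}\right| \geq a \right\} = \left\{ \left|\zeta_n^{m_1m_2m_3}\right|^s \geq a^s \right\},
\]
so that applying \eqref{eq:Chebyshev} to $\left|\zeta_n^{m_1m_2m_3}\right|^s$ gives
\[
\Pr\left( \left|\zeta_n^{m_1m_2m_3}\right| \geq a \right) \leq \frac{\left\langle \left|\zeta_n^{m_1m_2m_3}\right|^s \right\rangle_\delta}{a^s}.
\]

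Next I would choose the threshold to match the statement, namely
\[
a = F'^{1/s}\, e^{\varepsilon\left(|x_n| + \sum_i |x_{m_i}|\right)}\, e^{-\frac{1}{3}(\gamma - \varepsilon' - \eta)\sum_i |x_n - x_{m_i}|},
\]
so that $a^s = F'\, e^{\varepsilon s\left(|x_n|+\sum_i|x_{m_i}|\right)}\, e^{-\frac{s}{3}(\gamma-\varepsilon'-\eta)\sum_i|x_n-x_{m_i}|}$. Substituting the bound of Proposition~\ref{pro:zeta_bound} into the numerator, the prefactor $F'$ and the growth factor $e^{\varepsilon s(|x_n|+\sum_i|x_{m_i}|)}$ cancel exactly against the same factors in $a^s$, and the two localization-decay exponents combine to
\[
-\tfrac{s}{3}(\gamma-\varepsilon')\sum_i|x_n-x_{m_i}| + \tfrac{s}{3}(\gamma-\varepsilon'-\eta)\sum_i|x_n-x_{m_i}| = -\tfrac{\eta s}{3}\sum_i|x_n-x_{m_i}|,
\]
which is precisely the claimed right-hand side.

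Since the argument is a single application of Markov's inequality to a power-transformed variable, there is no genuine analytic obstacle; the only thing to get right is the exponent bookkeeping. The conceptual content is that the threshold $a$ is obtained by \emph{weakening} the localization-decay rate of Proposition~\ref{pro:zeta_bound} from $(\gamma-\varepsilon')$ to $(\gamma-\varepsilon'-\eta)$, and the resulting "deficit" $\eta$ is exactly the rate that surfaces as the exponential decay of the probability. I would also remark for completeness that the bound is only informative when $\sum_i|x_n-x_{m_i}|$ is large, since otherwise the right-hand side exceeds $1$; this is the relevant regime, because it is precisely the large-separation tail that must be controlled when the factors $\zeta_n^{m_1m_2m_3}$ are multiplied together and summed in a term of order $k$.
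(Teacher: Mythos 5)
Your proof is correct and is essentially the paper's own argument: the paper derives Corollary \ref{cor:zeta_bound} by applying the Chebyshev/Markov inequality \eqref{eq:Chebyshev} to the moment bound of Proposition \ref{pro:zeta_bound}, exactly as you do, with the threshold chosen so that the weakened decay rate $(\gamma-\varepsilon'-\eta)$ leaves the deficit $e^{-\frac{\eta s}{3}\sum_i|x_n-x_{m_i}|}$ as the probability bound. Your explicit note that Markov must be applied to $\left|\zeta_n^{m_1m_2m_3}\right|^s$ rather than to $\left|\zeta_n^{m_1m_2m_3}\right|$ itself is the correct reading of the paper's terse one-line derivation.
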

A general term in the expression for different orders of $c_{n}^{\left(k\right)}$
is given by the form of $\left.\left\vert \zeta_{n}^{m_{1}m_{2}m_{3}}\right\vert \left\vert \zeta_{m_{1}}^{m_{4}m_{5}m_{6}}\right\vert \cdots\left\vert \zeta_{m_{k-1}}^{000}\right\vert \right.,$
i.e., it contains $\left(k-1\right)$ summations over indices which
run over all the lattice. First we construct a general procedure to
bound a product of $k,$ $\zeta$ 's. A product of two $\zeta$ 's
is bounded by\[
\left\langle \left\vert {\displaystyle \sum\limits _{m_{1}}}\zeta_{n}^{m_{1}m_{2}m_{3}}\zeta_{m_{1}}^{m_{4}m_{5}m_{6}}\right\vert ^{s}\right\rangle _{\delta}\leq\left\langle {\displaystyle \sum\limits _{m_{1}}}\left\vert \zeta_{n}^{m_{1}m_{2}m_{3}}\right\vert ^{s}\left\vert \zeta_{m_{1}}^{m_{4}m_{5}m_{6}}\right\vert ^{s}\right\rangle _{\delta},\]
where $\left\langle \cdot\right\rangle _{\delta}$ denotes an average
over realizations where \eqref{eq:real_assump} is satisfied.

Using the Cauchy-Shwarz inequality\[
{\displaystyle \sum\limits _{m_{1}}}\left\langle \left\vert \zeta_{n}^{m_{1}m_{2}m_{3}}\right\vert ^{s}\left\vert \zeta_{m_{1}}^{m_{4}m_{5}m_{6}}\right\vert ^{s}\right\rangle _{\delta}\leq{\displaystyle \sum\limits _{m_{1}}}\left\langle \left\vert \zeta_{n}^{m_{1}m_{2}m_{3}}\right\vert ^{2s}\right\rangle _{\delta}^{1/2}\left\langle \left\vert \zeta_{m_{1}}^{m_{4}m_{5}m_{6}}\right\vert ^{2s}\right\rangle _{\delta}^{1/2}\]
setting $0<s<\frac{1}{4}$ (notice, that $s<\frac{1}{4}$ and not
$s<\frac{1}{2}$, due to \eqref{eq:zeta_cauchy_shwarz}) and inserting
the bound on the average $\left\langle \left\vert \zeta_{n}^{m_{1}m_{2}m_{3}}\right\vert ^{2s}\right\rangle _{\delta}$
from Proposition \ref{pro:zeta_bound} gives\begin{align}
 & \left\langle \left\vert {\displaystyle \sum\limits _{m_{1}}}\zeta_{n}^{m_{1}m_{2}m_{3}}\zeta_{m_{1}}^{m_{4}m_{5}m_{6}}\right\vert ^{s}\right\rangle _{\delta}\label{eq:two_zeta}\\
 & \leq F^{\prime}\exp\left[\varepsilon s\left(\left\vert x_{n}\right\vert +{\displaystyle \sum\limits _{i=2}^{6}}\left\vert x_{m_{i}}\right\vert \right)\right]e^{-s\frac{\gamma-\varepsilon'}{3}\left(\left\vert x_{n}-x_{m_{2}}\right\vert +\left\vert x_{n}-x_{m_{3}}\right\vert \right)}\times\nonumber \\
 & \times{\displaystyle \sum\limits _{m_{1}}}e^{2\varepsilon s\left\vert x_{m_{1}}\right\vert }\exp-s\frac{\gamma-\varepsilon'}{3}\left(\left\vert x_{n}-x_{m_{1}}\right\vert +{\displaystyle \sum\limits _{i=4}^{6}}\left\vert x_{m_{1}}-x_{m_{i}}\right\vert \right),\nonumber \end{align}
where we have used the inequality \[
\left(\sum_{i}\left|x_{i}\right|\right)^{s}\leq\sum_{i}\left|x_{i}\right|^{s}\qquad0<s<1.\]
Using the triangle inequality in the same manner as in \eqref{eq:first_triangle}
 \begin{align}
\left\vert x_{n}-x_{m_{1}}\right\vert +{\displaystyle \sum\limits _{i=4}^{6}}\left\vert x_{m_{1}}-x_{m_{i}}\right\vert  & \geq\frac{1}{3}{\displaystyle \sum\limits _{i=4}^{6}}\left\vert x_{n}-x_{m_{i}}\right\vert +\frac{2}{3}{\displaystyle \sum\limits _{i=4}^{6}}\left\vert x_{m_{1}}-x_{m_{i}}\right\vert \end{align}
 we get \begin{align}
 & \left\langle \left\vert {\displaystyle \sum\limits _{m_{1}}}\zeta_{n}^{m_{1}m_{2}m_{3}}\zeta_{m_{1}}^{m_{4}m_{5}m_{6}}\right\vert ^{s}\right\rangle _{\delta}\label{eq:zeta_example}\\
 & \leq F^{\prime}\exp\varepsilon s\left(\left\vert x_{n}\right\vert +{\displaystyle \sum\limits _{i=2}^{6}}\left\vert x_{m_{i}}\right\vert \right)e^{-\frac{\gamma-\varepsilon'}{3}s\left(\left\vert x_{n}-x_{m_{2}}\right\vert +\left\vert x_{n}-x_{m_{3}}\right\vert \right)}\times\nonumber \\
 & \times\exp\left[-\frac{\gamma-\varepsilon'}{9}s{\displaystyle \sum\limits _{i=4}^{6}}\left\vert x_{n}-x_{m_{i}}\right\vert \right]{\displaystyle \sum\limits _{m_{1}}}e^{2\varepsilon s\left\vert x_{m_{1}}\right\vert }e^{-2\frac{\gamma-\varepsilon'}{3}s\sum\limits _{i=4}^{6}\left\vert x_{m_{1}}-x_{m_{i}}\right\vert }\nonumber \\
 & =F''\exp\left[s\varepsilon\left(\left\vert x_{n}\right\vert +{\displaystyle \sum\limits _{i=2}^{6}}\left\vert x_{m_{i}}\right\vert \right)-\frac{\gamma-\varepsilon'}{9}s{\displaystyle \sum\limits _{i=4}^{6}}\left\vert x_{n}-x_{m_{i}}\right\vert \right]e^{-\frac{\gamma-\varepsilon'}{3}s\left(\left\vert x_{n}-x_{m_{2}}\right\vert +\left\vert x_{n}-x_{m_{3}}\right\vert \right)},\nonumber \end{align}
 where $F''\left(\gamma,\varepsilon',s,\varepsilon\right)={\displaystyle F'\sum\limits _{m_{1}}}e^{2\varepsilon s\left\vert x_{m_{1}}\right\vert }e^{-2\frac{\gamma-\varepsilon'}{3}s\sum\limits _{i=4}^{6}\left\vert x_{m_{1}}-x_{m_{i}}\right\vert }<\infty,$
in the following also other convergent sums of this type will be denoted
by $F''.$

If the term we consider is a term in the perturbation expansion it
should include some factors $\zeta_{m}^{m_{1}m_{2}m_{3}}$ with some
$m_{i}=0.$ A simple example is where $\left.x_{m_{1}}=x_{m_{2}}=x_{m_{3}}=0\right.$\begin{equation}
\left\langle \left\vert \zeta_{m}^{000}\right\vert ^{s}\right\rangle _{\delta}\leq F^{\prime}e^{-s\left(\gamma-\varepsilon'-\varepsilon\right)\left\vert x_{n}\right\vert },\label{eq:zeta_k0}\end{equation}
 where the bound was calculated using Proposition \ref{pro:zeta_bound}.
The product should terminate with a term of the form $\zeta_{m}^{000}$
therefore a term like \eqref{eq:two_zeta} is a part of a product
of the form,\begin{equation}
{\displaystyle \sum\limits _{\left\{ m_{i}\right\} }}\left\vert \zeta_{n}^{m_{1}m_{2}m_{3}}\right\vert \left\vert \zeta_{m_{1}}^{m_{4}m_{5}m_{6}}\right\vert \left\vert \zeta_{m_{2}}^{000}\right\vert \left\vert \zeta_{m_{3}}^{000}\right\vert \left\vert \zeta_{m_{4}}^{000}\right\vert \left\vert \zeta_{m_{5}}^{000}\right\vert \left\vert \zeta_{m_{6}}^{000}\right\vert .\end{equation}
 To bound it we use the generalized Hölder inequality,\[
\left\langle \prod_{i=1}^{k}\left|x_{i}\right|\right\rangle \leq\prod_{i=1}^{k}\left\langle \left|x_{i}\right|^{k}\right\rangle ^{1/k}.\]
Applying it yields,\begin{align}
 & {\displaystyle \sum\limits _{\left\{ m_{i}\right\} }}\left\langle \left\vert \zeta_{n}^{m_{1}m_{2}m_{3}}\right\vert ^{s}\left\vert \zeta_{m_{1}}^{m_{4}m_{5}m_{6}}\right\vert ^{s}\left\vert \zeta_{m_{2}}^{000}\right\vert ^{s}\left\vert \zeta_{m_{3}}^{000}\right\vert ^{s}\left\vert \zeta_{m_{4}}^{000}\right\vert ^{s}\left\vert \zeta_{m_{5}}^{000}\right\vert ^{s}\left\vert \zeta_{m_{6}}^{000}\right\vert ^{s}\right\rangle _{\delta}\\
 & \leq{\displaystyle \sum\limits _{\left\{ m_{i}\right\} }}\left(\left\langle \left\vert \zeta_{n}^{m_{1}m_{2}m_{3}}\right\vert ^{7s}\right\rangle _{\delta}\left\langle \left\vert \zeta_{m_{1}}^{m_{4}m_{5}m_{6}}\right\vert ^{7s}\right\rangle _{\delta}{\displaystyle \prod\limits _{i=2}^{6}}\left\langle \left\vert \zeta_{m_{i}}^{000}\right\vert ^{7s}\right\rangle _{\delta}\right)^{1/7}\nonumber \end{align}
 setting $0<s<\frac{1}{14}$ and inserting the bounds on the averages
\eqref{eq:two_zeta} and \eqref{eq:zeta_k0} gives\begin{align}
 & \left\langle \left({\displaystyle \sum\limits _{\left\{ m_{i}\right\} }}\zeta_{n}^{m_{1}m_{2}m_{3}}\zeta_{m_{1}}^{m_{4}m_{5}m_{6}}\zeta_{m_{2}}^{000}\zeta_{m_{3}}^{000}\zeta_{m_{4}}^{000}\zeta_{m_{5}}^{000}\zeta_{m_{6}}^{000}\right)^{s}\right\rangle _{\delta}\label{eq:zeta_example1}\\
 & \leq F^{\prime}{\displaystyle \sum\limits _{\left\{ m_{i}\right\} }}\exp\varepsilon s\left(\left\vert x_{n}\right\vert +{\displaystyle \sum\limits _{i=2}^{6}}\left\vert x_{m_{i}}\right\vert \right)e^{-\frac{\gamma-\varepsilon'}{3}s\left(\left\vert x_{n}-x_{m_{2}}\right\vert +\left\vert x_{n}-x_{m_{3}}\right\vert \right)}\times\nonumber \\
 & \times\exp\left[-\frac{\gamma-\varepsilon'}{9}s{\displaystyle \sum\limits _{i=4}^{6}}\left\vert x_{n}-x_{m_{i}}\right\vert -\left(\gamma-\varepsilon'-\varepsilon\right)s{\displaystyle \sum\limits _{i=2}^{6}}\left\vert x_{m_{i}}\right\vert \right]\nonumber \\
 & =F^{\prime}e^{\varepsilon s\left\vert x_{n}\right\vert }{\displaystyle \sum\limits _{\left\{ m_{i}\right\} }}e^{-\frac{\gamma-\varepsilon'}{3}s\left(\left\vert x_{n}-x_{m_{2}}\right\vert +\left\vert x_{n}-x_{m_{3}}\right\vert \right)}\times\nonumber \\
 & \times\exp\left[-\frac{\gamma-\varepsilon'}{9}s{\displaystyle \sum\limits _{i=4}^{6}}\left\vert x_{n}-x_{m_{i}}\right\vert -\left(\gamma-\varepsilon'-2\varepsilon\right)s{\displaystyle \sum\limits _{i=2}^{6}}\left\vert x_{m_{i}}\right\vert \right],\nonumber \end{align}
 where $\left\{ m_{i}\right\} $ stands for a sum over all the $m_{i}.$
Using the inequality,\begin{equation}
{\displaystyle \sum\limits _{i=4}^{6}}\left(\left\vert x_{n}-x_{m_{i}}\right\vert +\left\vert x_{m_{i}}\right\vert \right)\geq3\left\vert x_{n}\right\vert ,\end{equation}
 we get\begin{align}
 & \left\langle \left({\displaystyle \sum\limits _{\left\{ m_{i}\right\} }}\zeta_{n}^{m_{1}m_{2}m_{3}}\zeta_{m_{1}}^{m_{4}m_{5}m_{6}}\zeta_{m_{2}}^{000}\zeta_{m_{3}}^{000}\zeta_{m_{4}}^{000}\zeta_{m_{5}}^{000}\zeta_{m_{6}}^{000}\right)^{s}\right\rangle _{\delta}\label{eq:zeta_example2}\\
 & \leq F^{\prime}e^{-\left(\gamma-\varepsilon'-\varepsilon\right)s\left\vert x_{n}\right\vert }{\displaystyle \sum\limits _{\left\{ m_{i}\right\} }}\exp-2s\left(\frac{\gamma-\varepsilon'}{3}-\varepsilon\right){\displaystyle \sum\limits _{i=2}^{6}}\left\vert x_{m_{i}}\right\vert \nonumber \\
 & =F^{\prime}e^{-\left(\gamma-\varepsilon'-\varepsilon\right)s\left\vert x_{n}\right\vert }.\nonumber \end{align}
 Let us study the form of a general graph (c.f. Fig. \ref{fig:tree}).
\begin{figure}
\begin{centering}
\includegraphics[clip,width=0.75\textwidth]{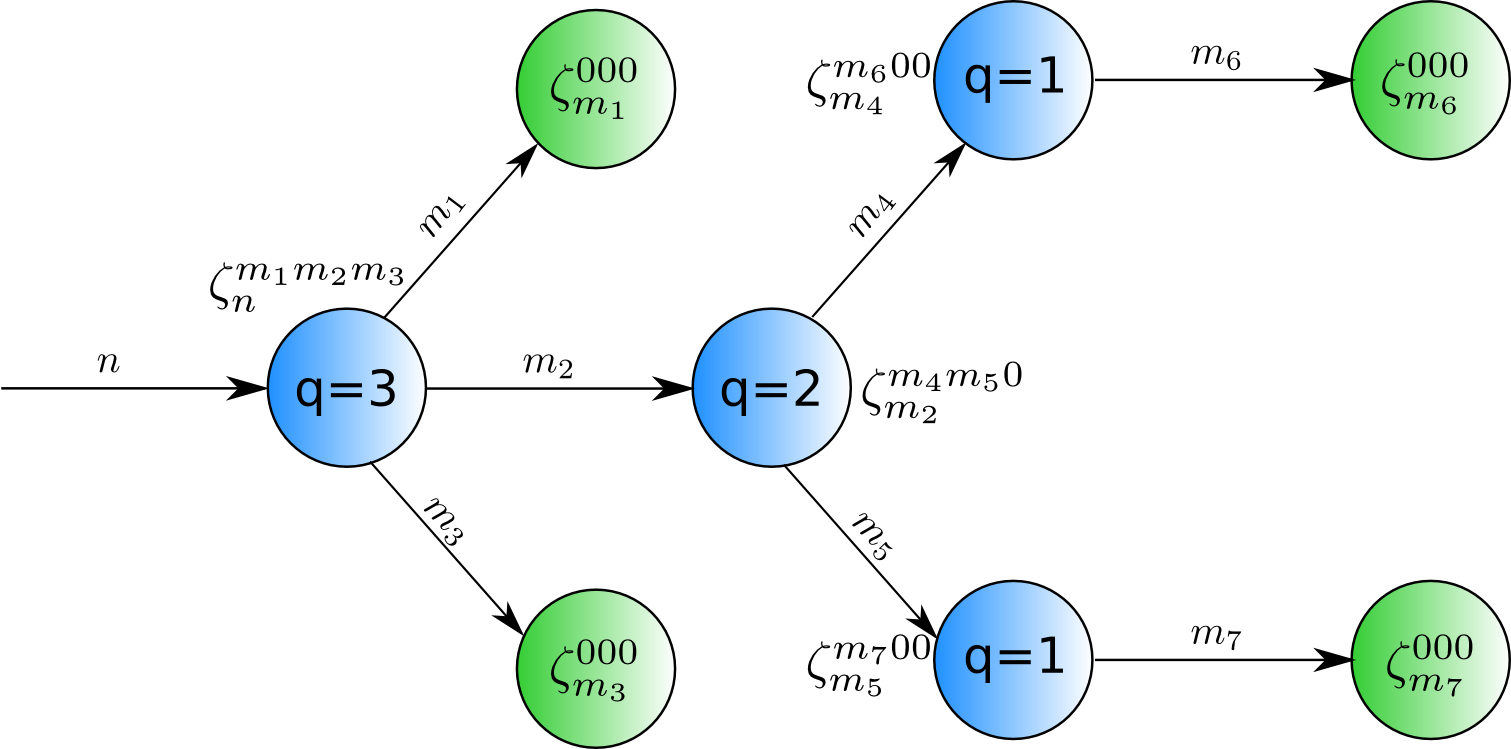}
\par\end{centering}

\caption{An example of a graph that is used to construct the general term.
The graph describes an 8-th order term, $\zeta_{n}^{m_{1}m_{2}m_{3}}\zeta_{m_{1}}^{000}\zeta_{m_{3}}^{000}\zeta_{m_{2}}^{m_{4}m_{5}0}\zeta_{m_{4}}^{m_{6}00}\zeta_{m_{5}}^{m_{7}00}\zeta_{m_{6}}^{000}\zeta_{m_{7}}^{000}.$}

\centering{}\label{fig:tree}
\end{figure}

It can be described as a tree starting from $n,$ the \textquotedbl{}root\textquotedbl{}
and four types of branching points, where $q$ $\left(0\leq q\leq3\right)$
branches continue while $3-q$ branches terminate. A branch which
continues is associated with a value $m_{i}\neq0,$ while a branch
terminates with $m_{i}=0.$ In the above, bounds on branches with
$q=0$ and 3 are calculated, and are given by \eqref{eq:av_zeta_bound}
and \eqref{eq:zeta_k0}, respectively. The bounds for $q=1,2$ follow
similarly from Proposition \ref{pro:zeta_bound}. Along each {}``bond''
from the \textquotedbl{}root\textquotedbl{} to the {}``leaves''
a term $\zeta_{n}^{m_{1}m_{2}m_{3}}$ is multiplied. At a point from
where $q$ branches continue the exponent is reduced by a factor of
$q.$ In other words, with $x_{n}$ and $x_{m}$ are connected by
a path that is crossing $l$ branching points with ratios $q_{1},\ldots,q_{l}$
the bound on the product of the zetas contains a factor $\exp-\frac{\gamma}{\tilde{q}}\left\vert x_{n}-x_{m}\right\vert $
where $\tilde{q}={\displaystyle \prod\limits _{i=1}^{l}}q_{i}.$ The
total number of branch ends is also $\tilde{q}.$ To terminate all
the branches $\tilde{q}$ factors of the form $\zeta_{m_{i}}^{000}$
(bounded by (\ref{eq:zeta_k0})) should be multiplied resulting in
the a term that multiplies the bound on a sum of the form\begin{align}
 & {\displaystyle \sum\limits _{\left\{ m_{i}\right\} }}\exp\left[-\frac{\left(\gamma-\varepsilon'\right)}{\tilde{q}}s{\displaystyle \sum\limits _{i}}\left\vert x_{n}-x_{m_{i}}\right\vert -\left(\gamma-\varepsilon'\right)s{\displaystyle \sum\limits _{i}}\left\vert x_{m_{i}}\right\vert \right]\\
 & \leq\exp-\frac{\left(\gamma-\varepsilon'\right)}{\tilde{q}}s{\displaystyle \sum\limits _{i}}\left\vert x_{n}\right\vert {\displaystyle \sum\limits _{\left\{ m_{i}\right\} }}\exp-\left(1-\frac{1}{\tilde{q}}\right)\left(\gamma-\varepsilon'\right)s{\displaystyle \sum\limits _{i}}\left\vert x_{m_{i}}\right\vert \nonumber \\
 & =e^{-\frac{\gamma-\varepsilon'}{\tilde{q}}s\tilde{q}\left\vert x_{n}\right\vert }{\displaystyle \sum\limits _{\left\{ m_{i}\right\} }}\exp-\left(1-\frac{1}{\tilde{q}}\right)\left(\gamma-\varepsilon'\right)s{\displaystyle \sum\limits _{i}}\left\vert x_{m_{i}}\right\vert \equiv S_{\varepsilon}e^{-\left(\gamma-\varepsilon'\right)s\left\vert x_{n}\right\vert }\nonumber \end{align}
 restoring the original convergence rate. As the product consists
of $k$ terms the evaluation of $\left\vert \zeta_{n}^{m_{1}m_{2}m_{3}}\right\vert ^{sk}$
is required for the use of the Hölder inequality. Therefore it is
required that $0<s<1/2k.$
\begin{lem}
\label{lem:main_lemma}For a given $k$ (the number of $\zeta$'s),
$\delta,\varepsilon,\varepsilon',\eta'>0$ and $0<s<\frac{1}{2k}$
\begin{equation}
\left\langle \left\vert {\displaystyle \sum\limits _{\left\{ m_{i}\right\} }}\zeta_{n}^{m_{1}m_{2}m_{3}}\zeta_{m_{1}}^{m_{4}m_{5}m_{6}}\cdots\zeta_{m_{N-1}}^{000}\right\vert ^{s}\right\rangle _{\delta}\leq F_{\delta}^{\left(k\right)}e^{-\left(\gamma-\varepsilon-\varepsilon'\right)s\left\vert x_{n}\right\vert }\end{equation}
 or using the Chebyshev inequality \eqref{eq:Chebyshev}\begin{equation}
\Pr\left(\left\vert {\displaystyle \sum\limits _{\left\{ m_{i}\right\} }}\zeta_{n}^{m_{1}m_{2}m_{3}}\zeta_{m_{1}}^{m_{4}m_{5}m_{6}}\cdots\zeta_{m_{N-1}}^{000}\right\vert \geq\left(F_{\delta}^{\left(k\right)}\right)^{1/s}e^{-\left(\gamma-\varepsilon-\varepsilon'-\eta^{\prime}\right)\left\vert x_{n}\right\vert }\right)\leq e^{-\eta^{\prime}s\left\vert x_{n}\right\vert },\end{equation}
 where $F_{\delta}^{\left(k\right)}$ is a constant which is built
iteratively by the construction demonstrated in \eqref{eq:zeta_example1}
and \eqref{eq:zeta_example2} and is proportional to $D_{\delta}$.
\end{lem}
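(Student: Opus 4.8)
The plan is to prove Lemma \ref{lem:main_lemma} by induction on the tree structure, using the branching picture described just above the statement. The key realization is that every product of $k$ factors $\zeta$ appearing in $c_n^{(k)}$ corresponds to a tree rooted at $n$ whose edges carry the summation indices $m_i$, with exactly $(k-1)$ internal summations and a set of leaves where factors of the form $\zeta_{m_i}^{000}$ terminate the branches. I would first fix the H\"older exponent: since the product has $k$ factors, applying the generalized H\"older inequality replaces each average $\langle|\zeta|^s\rangle$ by $\langle|\zeta|^{sk}\rangle^{1/k}$, so to invoke Proposition \ref{pro:zeta_bound} (which requires exponent $<1/2$) I must take $0<s<\frac{1}{2k}$, exactly the hypothesis of the lemma.

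The main work is to show that summing over the internal indices $m_i$ does not destroy the exponential decay $e^{-(\gamma-\varepsilon-\varepsilon')s|x_n|}$. I would carry this out inductively from the leaves toward the root, mirroring the explicit two- and seven-factor computations in \eqref{eq:zeta_example} and \eqref{eq:zeta_example2}. The crucial mechanism is the one already isolated in the text: at a branching point from which $q$ branches continue, the triangle inequality
\begin{equation}
\left|x_n-x_{m_1}\right|+\sum_i\left|x_{m_1}-x_{m_i}\right|\geq\frac{1}{q}\sum_i\left|x_n-x_{m_i}\right|+\left(1-\frac{1}{q}\right)\sum_i\left|x_{m_1}-x_{m_i}\right|
\end{equation}
splits the decay so that a fraction $1/\tilde q$ (with $\tilde q=\prod q_i$ the product of branching ratios along a path) is transferred to a direct $|x_n-x_{m_i}|$ term, while the remaining fraction keeps the local sums over $m_i$ absolutely convergent. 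The resulting convergent geometric sums over each internal index are precisely the constants denoted $F''$ and $S_\varepsilon$; I would absorb all of them, together with $F'=D_\delta|V_\delta^{\varepsilon,\varepsilon'}|^s$, into a single constant $F_\delta^{(k)}$ proportional to $D_\delta$, noting that by the entropy bound \eqref{eq:recur_bound} there are at most $e^{2k}$ such trees so the constant stays finite for each fixed $k$.

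The payoff step is the cancellation already displayed: when $\tilde q$ leaves each contribute a factor $e^{-(\gamma-\varepsilon')s|x_{m_i}|}$ from \eqref{eq:zeta_k0}, combining the leaf decay with the transferred root decay via $\sum_i\left(|x_n-x_{m_i}|+|x_{m_i}|\right)\geq\tilde q\,|x_n|$ restores the full rate $e^{-(\gamma-\varepsilon')s\tilde q\,|x_n|/\tilde q}=e^{-(\gamma-\varepsilon')s|x_n|}$ at the root, with the leftover $e^{\varepsilon s|x_n|}$ prefactor from Proposition \ref{pro:zeta_bound} degrading the rate to $(\gamma-\varepsilon-\varepsilon')$, exactly as claimed. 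The Chebyshev form then follows immediately from \eqref{eq:Chebyshev} by shifting the rate by $\eta'$ and bounding the tail probability by $e^{-\eta' s|x_n|}$.

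The step I expect to be the genuine obstacle is making the inductive bookkeeping uniform across \emph{all} $4$ branching types $q\in\{0,1,2,3\}$ simultaneously, rather than just the special cases $q=0,3$ worked out in the text. The difficulty is that a general tree mixes branching ratios at different vertices, so the transferred decay fraction $1/\tilde q$ varies along different root-to-leaf paths; I would need to argue that the weakest such fraction still leaves every intermediate summation convergent, which requires choosing $\varepsilon$ small enough that $\left(1-\frac{1}{\tilde q}\right)(\gamma-\varepsilon')-2\varepsilon>0$ holds uniformly (the worst case being $\tilde q$ large, where this reduces to $\gamma-\varepsilon'-2\varepsilon>0$). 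Verifying that the single constant $F_\delta^{(k)}$ genuinely bounds the entire sum, and not merely each term, is where the argument must be stated carefully rather than merely illustrated by examples.
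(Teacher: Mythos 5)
Your proposal follows essentially the same route as the paper: H\"older with $0<s<\frac{1}{2k}$ so that Proposition \ref{pro:zeta_bound} applies to each factor, triangle-inequality redistribution of the decay at each branching point (rate divided by $\tilde q$ along a root-to-leaf path), combination of the leaf factors $\zeta_{m_i}^{000}$ with the transferred decay via $\sum_i\left(\left|x_n-x_{m_i}\right|+\left|x_{m_i}\right|\right)\geq\tilde q\left|x_n\right|$ to restore the rate $\left(\gamma-\varepsilon-\varepsilon'\right)$ at the root, absorption of the convergent intermediate sums into $F_\delta^{(k)}$, and Chebyshev for the probabilistic form --- which is precisely the iterative construction the paper itself gives (and, like you, only illustrates on the $q=0,3$ examples rather than formalizing for all branching types). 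One minor point: the entropy factor $e^{2k}$ from \eqref{eq:recur_bound} is not needed in this lemma, since it bounds a single configuration (a single tree); the counting over the $R_k$ configurations enters only in the proof of Theorem \ref{thm:main_theorem}.
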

It is of importance, that any product with the same number of zetas
has the same bound with the same probability. This allows us to bound
$c_{n}^{\left(k\right)}$ by counting the number of different configurations,
$R_{k}$, of the product for a given $k$ and then multiplying it
by the bound of each product. This proves the theorem:
\begin{thm}
\label{thm:main_theorem}For a given $k$ and $\delta,\varepsilon,\varepsilon',\eta'>0$\begin{equation}
\Pr\left(\left\vert c_{n}^{\left(k\right)}\right\vert \geq\left(F_{\delta}^{\left(k\right)}\right)^{k}e^{ck^{2}+c^{\prime}k}e^{-\left(\gamma-\varepsilon-\varepsilon'-\eta^{\prime}\right)\left\vert x_{n}\right\vert }\right)\leq e^{-c^{\prime}}e^{-\eta^{\prime}\left\vert x_{n}\right\vert /k}.\label{eq:theorem1}\end{equation}
where $F_{\delta}^{\left(k\right)}$ which is proportional to $D_{\delta}$
and $c$ and $c'$ are constants.\end{thm}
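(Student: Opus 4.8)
The plan is to assemble Theorem \ref{thm:main_theorem} from two ingredients already in hand: the single-configuration fractional-moment bound of Lemma \ref{lem:main_lemma}, and the entropy count $R_k\le e^{2k}$ of Section \ref{sec:The-entropy-problem} (eq.~\eqref{eq:recur_bound}). The first step is to record that, after the secular subtraction, $c_n^{(k)}$ is nothing but a sum of at most $R_k$ products of $k$ factors $\zeta$, each product summed over its internal indices; the energy-renormalization pieces $E_n^{(l)}$ appearing in \eqref{eq:order_N} share this same $\zeta$-product structure and are therefore already absorbed into the count. Writing $c_n^{(k)}=\sum_{j=1}^{R_k}P_j$ and using subadditivity of $t\mapsto t^s$ for $0<s<1$, namely $|c_n^{(k)}|^s\le(\sum_j|P_j|)^s\le\sum_j|P_j|^s$, I would pass to the restricted average $\langle\cdot\rangle_\delta$ and invoke Lemma \ref{lem:main_lemma}. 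Because that lemma gives \emph{the same} bound $F_\delta^{(k)}e^{-(\gamma-\varepsilon-\varepsilon')s|x_n|}$ for every configuration, the sum collapses into a single factor of $R_k$:
\[
\left\langle\left|c_n^{(k)}\right|^s\right\rangle_\delta\le R_k\,F_\delta^{(k)}\,e^{-(\gamma-\varepsilon-\varepsilon')s|x_n|}\le e^{2k}F_\delta^{(k)}e^{-(\gamma-\varepsilon-\varepsilon')s|x_n|}.
\]

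The second step is to convert this moment estimate into the stated tail bound through the Chebyshev/Markov inequality \eqref{eq:Chebyshev} applied to $|c_n^{(k)}|^s$, i.e. $\Pr(|c_n^{(k)}|\ge a)\le a^{-s}\langle|c_n^{(k)}|^s\rangle_\delta$. I would choose the threshold $a=(F_\delta^{(k)})^k e^{ck^2+c'k}e^{-(\gamma-\varepsilon-\varepsilon'-\eta')|x_n|}$ and then fix $s$ of order $1/k$ — the largest fractional order compatible with the $s<\tfrac{1}{2k}$ constraint of Lemma \ref{lem:main_lemma}, with the residual numerical factors absorbed into $c,c',\eta'$. With this choice the two $|x_n|$-exponents compete to leave exactly $e^{-\eta' s|x_n|}=e^{-\eta'|x_n|/k}$; raising the threshold prefactor to the power $-s=-1/k$ turns $(F_\delta^{(k)})^k$ into $(F_\delta^{(k)})^{-1}$, which cancels the lone $F_\delta^{(k)}$ from the moment bound; the entropy factor contributes $R_k^{1/s}=R_k^{k}\le e^{2k^2}$, fixing the quadratic term with $c=2$ (inherited from the entropy constant $\ln\frac{27}{4}<2$); and the remaining linear-in-$k$ slack is collected into $e^{-c'}$. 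This reproduces \eqref{eq:theorem1}.

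The main obstacle I anticipate is the delicate scaling $s\sim1/k$ and its propagation through Markov's inequality. Since the $R_k$ products are strongly dependent — they are built from the same localized eigenfunctions and energies — no independence or concentration argument is available, and everything must be routed through a single fractional moment whose admissible order is driven down to $O(1/k)$ by the iterated H\"{o}lder/Cauchy-Schwarz steps underlying Lemma \ref{lem:main_lemma}. Tracking this $1/k$ faithfully is what forces the super-exponential prefactor $e^{ck^2}$ and the power $(F_\delta^{(k)})^k$ in the threshold, while simultaneously diluting the spatial decay from $e^{-(\gamma-\varepsilon-\varepsilon')|x_n|}$ in a single term to the $k$-weakened rate $e^{-\eta'|x_n|/k}$ in the probability. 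The one point I would take care to state precisely is the boundary value of $s$: the clean identification $s=1/k$ that reproduces \eqref{eq:theorem1} exactly sits just outside the strict inequality $s<\tfrac1{2k}$, so the statement should either carry the constants rescaled by the factor lost at $s=\tfrac1{2k}$, or be accompanied by a check that Lemma \ref{lem:main_lemma} in fact tolerates $s$ up to $1/k$.
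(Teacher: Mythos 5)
Your proposal is correct and follows essentially the same route as the paper's own proof: decompose $c_n^{(k)}$ into at most $R_k$ configurations, use subadditivity of $|\cdot|^s$ together with Lemma \ref{lem:main_lemma} to get $\langle|c_n^{(k)}|^s\rangle_\delta\le e^{2k}F_\delta^{(k)}e^{-(\gamma-\varepsilon-\varepsilon')s|x_n|}$, then apply Chebyshev with the largest admissible $s\sim 1/k$ so that $(F_\delta^{(k)})^{1/s}$, $e^{2k/s}$ and $e^{-\eta's|x_n|}$ produce the stated prefactors and tail. Your closing caveat about $s=1/k$ lying outside the strict constraint $s<\tfrac{1}{2k}$ is a point the paper itself glosses over (it merely says ``choosing the largest $s$''), and your proposed remedy --- rescaling the constants $c,c',\eta'$ to account for $s=\tfrac{1}{2k}$ --- is exactly what is implicitly absorbed into the generic constants of the theorem.
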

\begin{proof}
Using Lemma \ref{lem:main_lemma} and summing over configurations
denoted by $i_{c}$ one obtains the bound \begin{align}
\left\langle \left\vert c_{n}^{\left(k\right)}\right\vert ^{s}\right\rangle _{\delta} & =\left\langle \left\vert \sum_{i_{c}=1}^{R_{k}}\left\{ {\displaystyle \sum\limits _{\left\{ m_{i}\right\} }}\zeta_{n}^{m_{1}m_{2}m_{3}}\zeta_{m_{1}}^{m_{4}m_{5}m_{6}}\cdots\zeta_{m_{k-1}}^{000}\right\} \right\vert ^{s}\right\rangle _{\delta}\\
 & \leq\sum_{i_{c}=1}^{R_{k}}\left\langle \left\vert {\displaystyle \sum\limits _{\left\{ m_{i}\right\} }}\zeta_{n}^{m_{1}m_{2}m_{3}}\zeta_{m_{1}}^{m_{4}m_{5}m_{6}}\cdots\zeta_{m_{k-1}}^{000}\right\vert ^{s}\right\rangle _{\delta}\leq R_{k}F_{\delta}^{\left(k\right)}e^{-\left(\gamma-\varepsilon-\varepsilon'\right)s\left\vert x_{n}\right\vert }.\nonumber \end{align}
Using \eqref{eq:recur_bound}\begin{equation}
\left\langle \left\vert c_{n}^{\left(k\right)}\right\vert ^{s}\right\rangle _{\delta}\leq e^{2k}F_{\delta}^{\left(k\right)}e^{-\left(\gamma-\varepsilon-\varepsilon'\right)s\left\vert x_{n}\right\vert }\end{equation}
 or \begin{equation}
\Pr\left(\left\vert c_{n}^{\left(k\right)}\right\vert \geq e^{c^{\prime}/s}\left(F_{\delta}^{\left(k\right)}\right)^{1/s}e^{-\left(\gamma-\varepsilon-\varepsilon'-\eta^{\prime}\right)\left\vert x_{n}\right\vert }e^{2k/s}\right)\leq e^{-c^{\prime}}e^{-\eta^{\prime}s\left\vert x_{n}\right\vert }\end{equation}
Choosing the largest $s$ produces a bound where $c,c'$ are constants
and $F_{\delta}^{\left(k\right)}$ is a constant which is built iteratively
by the construction demonstrated in \eqref{eq:zeta_example1} and
\eqref{eq:zeta_example2} and is proportional to $D_{\delta}$. \end{proof}
\begin{rem}
From \eqref{eq:recur_bound} one sees that $c\backsimeq2$ and later
we set $c'=cN$.
\end{rem}

\section{\label{sec:Numerical-results}Numerical results}

In this section we will check how well the perturbation series up
to the second order in $\beta$ approximates the numerical solution
of \eqref{eq:NLSE}. For this purpose we use the expressions for $c_{n}^{\left(1\right)}$
and $c_{n}^{\left(2\right)}$ which were obtained in \eqref{eq:cn1}
and \eqref{eq:cn_order2}, respectively, and also the expression for
the renormalized energies $E_{n}'$ up to second order in $\beta$
which are given by \eqref{eq:en_oder2}. We use the perturbation expansion
up to the second order in $\beta$, namely\begin{equation}
\bar{c}_{n}=c_{n}^{\left(0\right)}+\beta c_{n}^{\left(1\right)}+\beta^{2}\left.c_{n}^{\left(2\right)}\right|_{\beta=0},\label{eq:second_order_numerics}\end{equation}
 where we took $\left.c_{n}^{\left(2\right)}\right|_{\beta=0}$ in
order to keep only the contribution of $c_{n}^{\left(2\right)}$ up
to the order $\beta^{2}$. To compare, we plot the real and the imaginary
parts of both the numerical solution of \eqref{eq:NLSE} with the
distribution \eqref{eq:uniform_dist} and the perturbative approximation
$\bar{c}_{n}$. From figures \ref{fig:central_beta_0.1} and \ref{fig:central_beta_0.01}
we see that the correspondence between the numerical solution of \eqref{eq:NLSE}
and the perturbative approximation is good for times $<50$ for $\beta=0.1$
and times $<200$ for $\beta=0.01$. %
\begin{figure}
\includegraphics[width=0.95\textwidth]{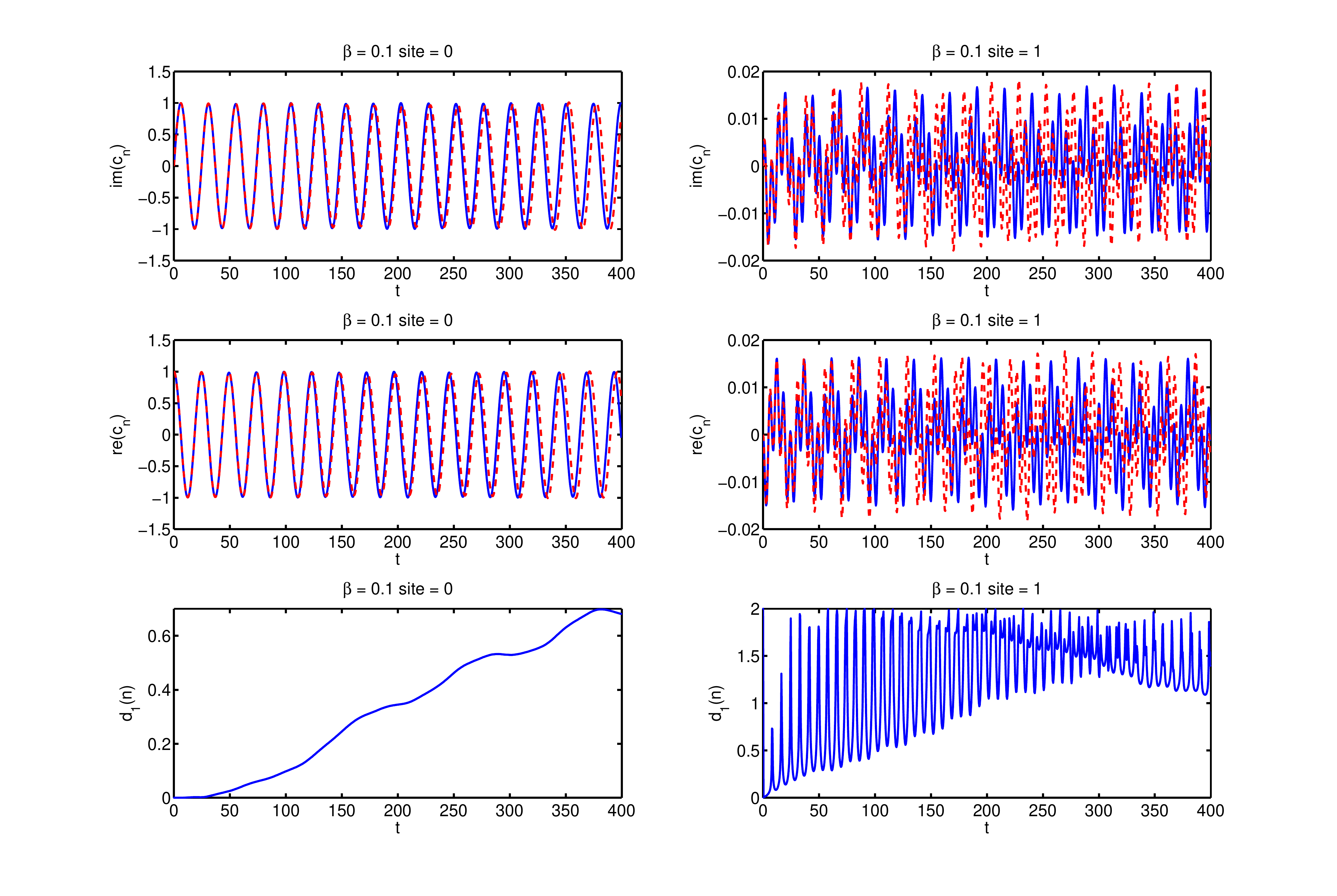}\caption{\label{fig:central_beta_0.1}In the first and the second rows of the
figure are presented the imaginary and real parts, respectively, of
the numerical solution of \eqref{eq:NLSE} (blue, solid) and the perturbative
approximation (red, dashed) as function of time for $c_{0}$ (left
column) and $c_{1}$ (right column). In the third row the relative
difference, $d_{1}\left(t\right),$ as defined in Definition \ref{def:The-relative-difference}
is plotted. The parameters of the plot are, $J=0.25$, $\Delta=1$
and $\beta=0.1$ and lattice size of 128. See \eqref{eq:NLSE} for
the definition of the constants.}

\end{figure}
\begin{figure}
\includegraphics[width=0.95\textwidth]{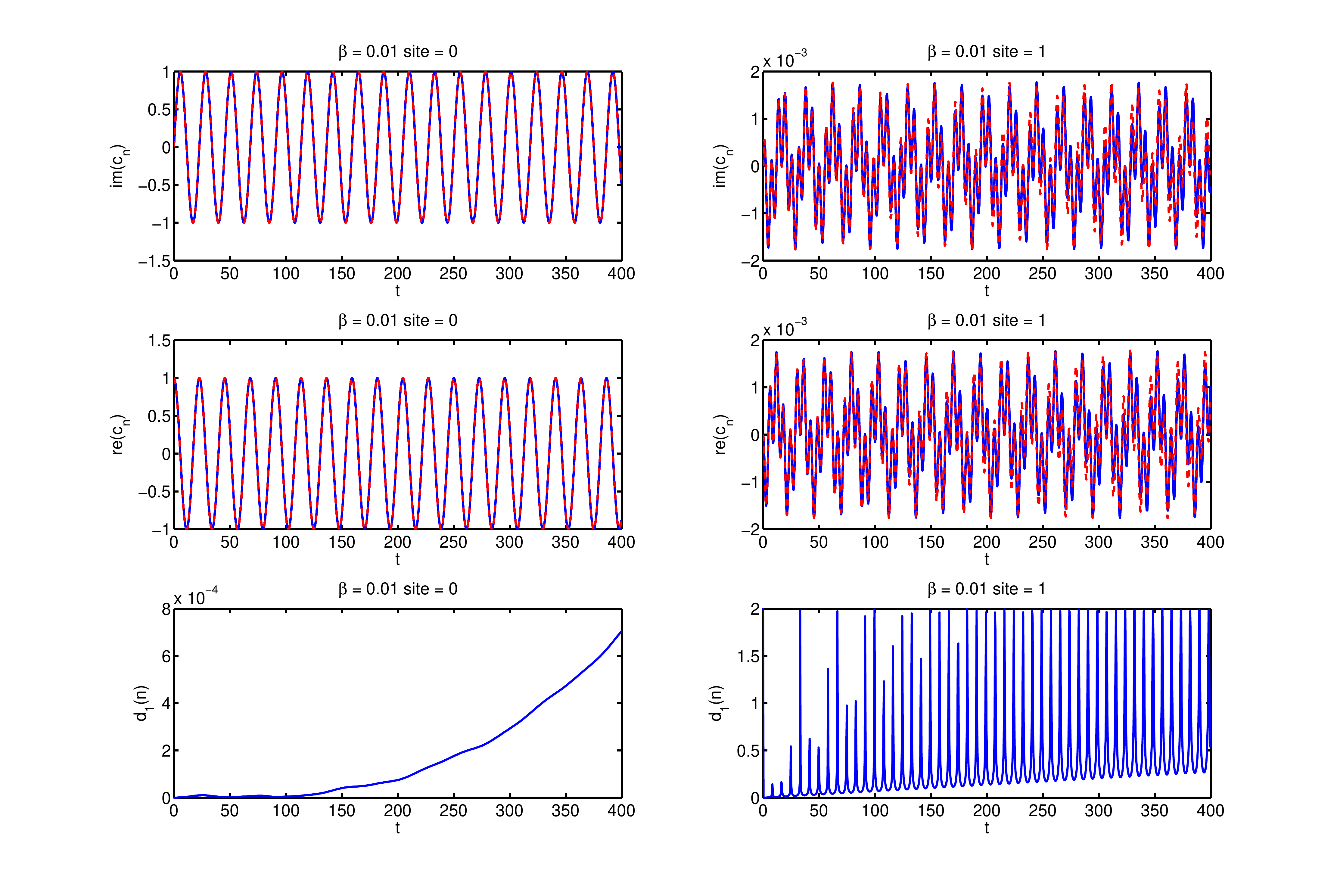}\caption{\label{fig:central_beta_0.01}Same as Figure \ref{fig:central_beta_0.1}
but for $\beta=0.01$.}

\end{figure}
 Additionally, the correspondence of the central site, $c_{0}$, which
is used as the initial condition, $c_{n}\left(t=0\right)=\delta_{n0}$,
is much better than the correspondence of the neighboring sites. A
possible explanation for this could be that the nonlinear perturbation
is more pronounced at the states $n$ with $c_{n}\left(t=0\right)=0$,
this is due to the fact that for $\beta\mbox{=0}$ those sites are
unpopulated (zero) for all times, resulting in lower signal to noise
ratio. To examine the convergence in time we will define a time, $t^{*}$.
For times $t<t^{*}$ the relative difference between the $L_{2}$
norms of the exact and the approximate solutions, $d_{2}\left(t\right)$,
defined bellow, is less than 10\%. It is instructive to introduce
the following definitions.
\begin{defn}
\label{def:The-relative-difference}The relative difference between
the exact and the perturbative solution at site $n$ is defined as\[
d_{1}\left(t\right)=\frac{2\left|c_{n}\left(t\right)-\bar{c}_{n}\left(t\right)\right|}{\left|c_{n}\left(t\right)\right|+\left|\bar{c}_{n}\left(t\right)\right|}.\]

\end{defn}
~
\begin{defn}
\label{def:t_star}$t^{*}$ is a time until which the relative difference
between the exact and the perturbative solution, $d_{2}\left(t\right)$,
is less than 10\%.
\end{defn}
\begin{equation}
d_{2}\left(t\right)=\frac{\sum_{n}\left|c_{n}\left(t\right)-\bar{c}_{n}\left(t\right)\right|^{2}}{\sum_{n}\left|c_{n}\left(t\right)\right|^{2}}\leq0.1.\end{equation}
In figure \ref{fig:t_star} we see that as $\beta$ becomes smaller
the time for which the expansion to second order in $\beta$ is close
to the exact solution (within 10\%), $t^{*}$, increases.%
\begin{figure}[H]
\includegraphics[width=0.75\textwidth]{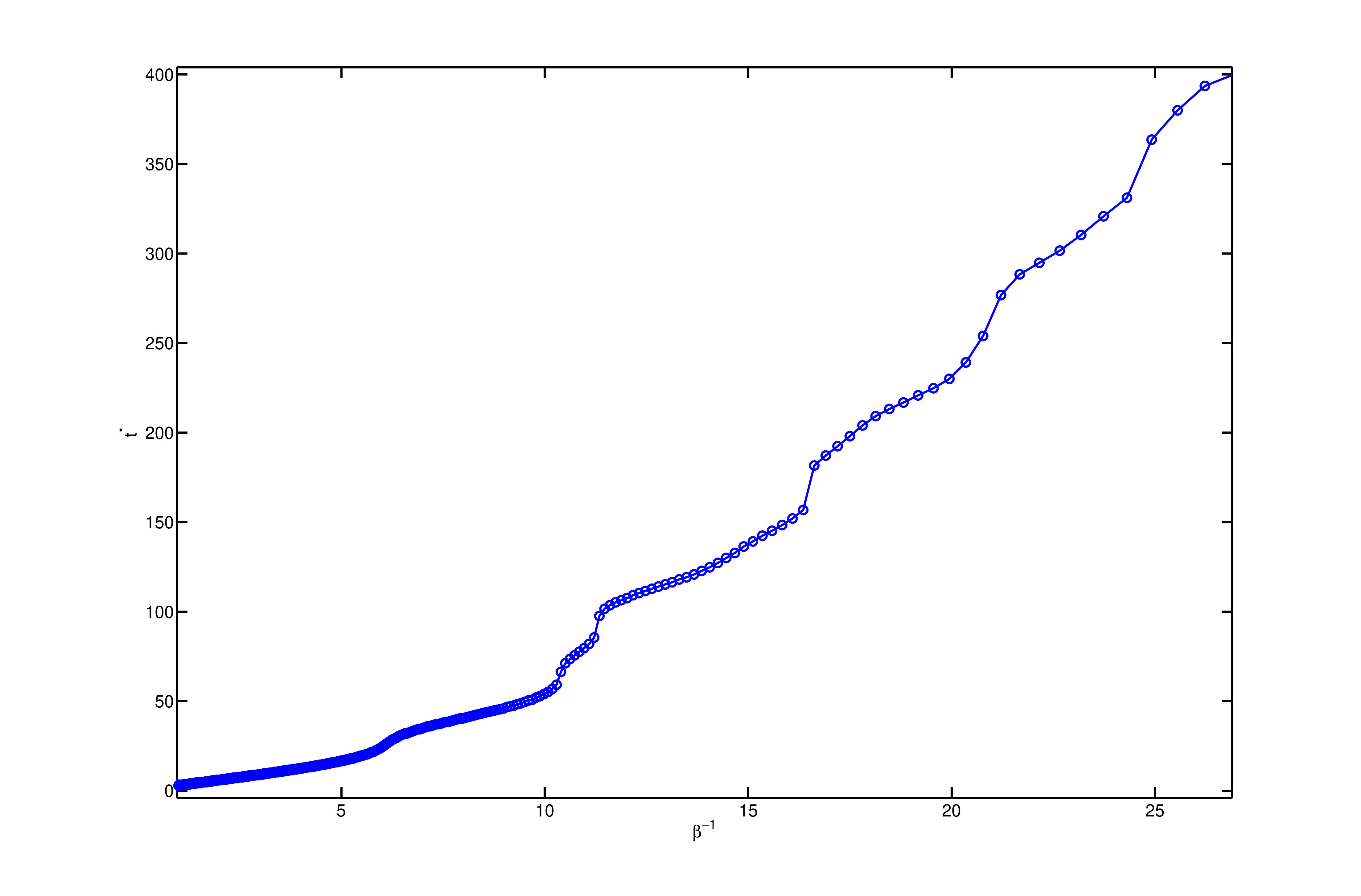}\caption{\label{fig:t_star}$t^{*}$ (see text) as a function of the inverse
nonlinearity strength, $\beta^{-1}$.}

\end{figure}

\section{\label{sec:Bounding-the-remainder}Bounding the remainder}

In order to bound the solution we have to bound, $Q_{n}$, the remainder
of the expansion (\ref{eq:cn_expand})\begin{equation}
c_{n}\left(t\right)=c_{n}^{\left(0\right)}+\beta c_{n}^{\left(1\right)}+\beta^{2}c_{n}^{\left(2\right)}+\cdots+\beta^{N-1}c_{n}^{\left(N-1\right)}+\beta^{N}Q_{n}.\end{equation}
This is achieved applying the bootstrap argument to the remainder.
Substituting the expansion \eqref{eq:cn_expand} into the equation
\eqref{eq:diff_eq} and writing a differential equation for the remainder
gives\begin{align}
i\partial_{t}Q_{n} & ={\displaystyle \sum\limits _{m_{1}m_{2}m_{3}}}\sum_{N-1\leq k+l+q\leq3\left(N-1\right)}^{N-1}\beta^{k+l+q+1-N}V_{n}^{m_{1}m_{2}m_{3}}c_{m_{1}}^{\ast\left(q\right)}c_{m_{2}}^{\left(l\right)}c_{m_{3}}^{\left(k\right)}e^{i\left(E_{n}^{\prime}+E_{m_{1}}^{\prime}-E_{m_{2}}^{\prime}-E_{m_{3}}^{\prime}\right)t}\\
 & +{\displaystyle \sum\limits _{m_{1}m_{2}m_{3}}}\sum_{0\leq k+l\leq2\left(N-1\right)}^{N-1}\beta^{l+k+1}V_{n}^{m_{1}m_{2}m_{3}}Q_{m_{1}}^{\ast}c_{m_{2}}^{\left(l\right)}c_{m_{3}}^{\left(k\right)}e^{i\left(E_{n}^{\prime}+E_{m_{1}}^{\prime}-E_{m_{2}}^{\prime}-E_{m_{3}}^{\prime}\right)t}\nonumber \\
 & +2{\displaystyle \sum\limits _{m_{1}m_{2}m_{3}}}\sum_{0\leq k+l\leq2\left(N-1\right)}^{N-1}\beta^{l+k+1}V_{n}^{m_{1}m_{2}m_{3}}Q_{m_{3}}c_{m_{1}}^{\left(l\right)\ast}c_{m_{2}}^{\left(k\right)}e^{i\left(E_{n}^{\prime}+E_{m_{1}}^{\prime}-E_{m_{2}}^{\prime}-E_{m_{3}}^{\prime}\right)t}\nonumber \\
 & +2{\displaystyle \sum\limits _{m_{1}m_{2}m_{3}}}\sum_{k=0}^{N-1}\beta^{N+k+1}V_{n}^{m_{1}m_{2}m_{3}}Q_{m_{1}}^{\ast}Q_{m_{2}}c_{m_{3}}^{\left(k\right)}e^{i\left(E_{n}^{\prime}+E_{m_{1}}^{\prime}-E_{m_{2}}^{\prime}-E_{m_{3}}^{\prime}\right)t}\nonumber \\
 & +{\displaystyle \sum\limits _{m_{1}m_{2}m_{3}}}\sum_{k=0}^{N-1}\beta^{N+k+1}V_{n}^{m_{1}m_{2}m_{3}}c_{m_{1}}^{\left(k\right)\ast}Q_{m_{2}}Q_{m_{3}}e^{i\left(E_{n}^{\prime}+E_{m_{1}}^{\prime}-E_{m_{2}}^{\prime}-E_{m_{3}}^{\prime}\right)t}\nonumber \\
 & +\sum_{m_{1}m_{2}m_{3}}\beta^{2N+1}V_{n}^{m_{1}m_{2}m_{3}}Q_{m_{1}}^{\ast}Q_{m_{2}}Q_{m_{3}}e^{i\left(E_{n}^{\prime}+E_{m_{1}}^{\prime}-E_{m_{2}}^{\prime}-E_{m_{3}}^{\prime}\right)t}\nonumber \end{align}
where the sums over orders are understood as follows. By $\sum_{N-1\leq k+l+q\leq3\left(N-1\right)}^{N-1}$
we mean $\sum_{k,l,q=0}^{N-1}$ with the constraint $\left.N-1\leq k+l+q\right..$
Integrating and using the fact, $Q_{n}\left(t=0\right)=0$,\begin{align}
\frac{\left\vert Q_{n}\right\vert }{t} & \leq{\displaystyle \sum\limits _{m_{1}m_{2}m_{3}}}\sum_{N-1\leq k+l+q\leq3\left(N-1\right)}^{N-1}\beta^{k+l+q+1-N}\left\vert V_{n}^{m_{1}m_{2}m_{3}}\right\vert \sup_{0\leq t'\leq t}\left[\left\vert c_{m_{1}}^{\ast\left(k\right)}\right\vert \left\vert c_{m_{2}}^{\left(l\right)}\right\vert \left\vert c_{m_{3}}^{\left(q\right)}\right\vert \right]\label{eq:bootstrap1}\\
 & +{\displaystyle \sum\limits _{m_{1}m_{2}m_{3}}}\sum_{0\leq k+l\leq2\left(N-1\right)}^{N-1}\beta^{l+k+1}\left\vert V_{n}^{m_{1}m_{2}m_{3}}\right\vert \sup_{0\leq t'\leq t}\left[\left\vert Q_{m_{1}}^{\ast}\right\vert \left\vert c_{m_{2}}^{\left(l\right)}\right\vert \left\vert c_{m_{3}}^{\left(k\right)}\right\vert \right]\nonumber \\
 & +2{\displaystyle \sum\limits _{m_{1}m_{2}m_{3}}}\sum_{0\leq k+l\leq2\left(N-1\right)}^{N-1}\beta^{l+k+1}\left\vert V_{n}^{m_{1}m_{2}m_{3}}\right\vert \sup_{0\leq t'\leq t}\left[\left\vert Q_{m_{3}}\right\vert \left\vert c_{m_{1}}^{\left(l\right)\ast}\right\vert \left\vert c_{m_{2}}^{\left(k\right)}\right\vert \right]\nonumber \\
 & +2{\displaystyle \sum\limits _{m_{1}m_{2}m_{3}}}\sum_{k=0}^{N-1}\beta^{N+k+1}\left\vert V_{n}^{m_{1}m_{2}m_{3}}\right\vert \sup_{0\leq t'\leq t}\left[\left\vert Q_{m_{1}}^{\ast}\right\vert \left\vert Q_{m_{2}}\right\vert \left\vert c_{m_{3}}^{\left(k\right)}\right\vert \right]\nonumber \\
 & +{\displaystyle \sum\limits _{m_{1}m_{2}m_{3}}}\sum_{k=0}^{N-1}\beta^{N+k+1}\left\vert V_{n}^{m_{1}m_{2}m_{3}}\right\vert \sup_{0\leq t'\leq t}\left[\left\vert Q_{m_{2}}\right\vert \left\vert Q_{m_{3}}\right\vert \left\vert c_{m_{1}}^{\left(k\right)\ast}\right\vert \right]\nonumber \\
 & +\beta^{2N+1}\sum_{m_{1}m_{2}m_{3}}\left\vert V_{n}^{m_{1}m_{2}m_{3}}\right\vert \sup_{0\leq t'\leq t}\left[\left\vert Q_{m_{1}}^{\ast}\right\vert \left\vert Q_{m_{2}}\right\vert \left\vert Q_{m_{3}}\right\vert \right],\nonumber \end{align}
where $t$ is the upper limit of the integration.

Since $Q_{n}\left(t\right)$ is continuous and $Q_{n}\left(t=0\right)=0$,
for small $t$, $Q_{n}\left(t\right)\sim t\cdot S_{1}$. Therefore
we can always find a sufficiently small $\tau>0$, such that \begin{equation}
\left|Q_{n}\left(\tau\right)\right|<2\tau\cdot S_{1},\label{eq:bootstrap1-1}\end{equation}
 where \begin{equation}
S_{1}={\displaystyle \sum\limits _{m_{1}m_{2}m_{3}}}\sum_{N-1\leq k+l+q\leq3\left(N-1\right)}^{N-1}\beta^{k+l+q+1-N}\left\vert V_{n}^{m_{1}m_{2}m_{3}}\right\vert \sup_{0\leq t'\leq t}\left[\left\vert c_{m_{1}}^{\ast\left(k\right)}\right\vert \left\vert c_{m_{2}}^{\left(l\right)}\right\vert \left\vert c_{m_{3}}^{\left(q\right)}\right\vert \right].\end{equation}
Assume, that there is some time, $t$, for which,\begin{equation}
\left|Q_{n}\left(t\right)\right|>2t\cdot S_{1},\label{eq:bootstrap1-2}\end{equation}
than since $Q_{n}\left(t\right)$ is continuous and \eqref{eq:bootstrap1-1}
holds there is a time, $t_{0}$, where \begin{equation}
\left|Q_{n}\left(t_{0}\right)\right|=2t_{0}\cdot S_{1}.\label{eq:t0_definition}\end{equation}
Inserting this equality into the inequality \eqref{eq:bootstrap1},
we get an interval $0\leq t\leq t_{0}$ for which \eqref{eq:bootstrap1-1}
holds (see \eqref{eq:bootstrap_t0}). We proceed by bounding $S_{1}$
and other sums of \eqref{eq:bootstrap1}.

Using the Theorem \ref{thm:main_theorem} obtained in the end of the
Section \ref{sec:Bounding-the-general}, we can bound, $S_{1}$. The
inequality is violated with a probability found from \eqref{eq:theorem1}.
We start with the bound\begin{align}
S_{1} & \leq{\displaystyle \sum\limits _{\left\{ m_{i}\right\} }}\sum_{N-1\leq k+l+q\leq3\left(N-1\right)}^{N-1}\beta^{k+l+q+1-N}\left\vert V_{n}^{m_{1}m_{2}m_{3}}\right\vert \times\\
\times & e^{c'\left(k+l+q\right)+c\left(k^{2}+l^{2}+q^{2}\right)}e^{-\left(\gamma-\varepsilon'-\eta^{\prime}\right)\left(\left\vert x_{m_{1}}\right\vert +\left\vert x_{m_{2}}\right\vert +\left\vert x_{m_{3}}\right\vert \right)}\nonumber \end{align}
 we find from \eqref{eq:overlap}\begin{align}
 & {\displaystyle \sum\limits _{m_{1}m_{2}m_{3}}}\left\vert V_{n}^{m_{1}m_{2}m_{3}}\right\vert e^{-\left(\gamma-\varepsilon'-\eta^{\prime}\right)\left(\left\vert x_{m_{1}}\right\vert +\left\vert x_{m_{2}}\right\vert +\left\vert x_{m_{3}}\right\vert \right)}\label{eq:bootstrap3}\\
 & \leq V_{\delta}^{\varepsilon,\varepsilon^{\prime}}{\displaystyle \sum\limits _{m_{1}m_{2}m_{3}}}e^{\varepsilon\left(\left\vert x_{n}\right\vert +\left\vert x_{m_{1}}\right\vert +\left\vert x_{m_{2}}\right\vert +\left\vert x_{m_{3}}\right\vert \right)}\nonumber \\
 & \times e^{-\frac{1}{3}\left(\gamma-\varepsilon^{\prime}\right)\left(\left\vert x_{n}-x_{m_{1}}\right\vert +\left\vert x_{n}-x_{m_{2}}\right\vert +\left\vert x_{n}-x_{m_{3}}\right\vert \right)}e^{-\left(\gamma-\eta^{\prime}-\varepsilon'\right)\left(\left\vert x_{m_{1}}\right\vert +\left\vert x_{m_{2}}\right\vert +\left\vert x_{m_{3}}\right\vert \right)}\nonumber \\
 & \leq V_{\delta}^{\varepsilon,\varepsilon^{\prime}}e^{-\left(\gamma-\varepsilon-\varepsilon^{\prime}\right)\left\vert x_{n}\right\vert }{\displaystyle \sum\limits _{m_{1}m_{2}m_{3}}}e^{-\frac{2}{3}\left(\gamma-\eta^{\prime}-\varepsilon\right)\left(\left\vert x_{m_{1}}\right\vert +\left\vert x_{m_{2}}\right\vert +\left\vert x_{m_{3}}\right\vert \right)}=C_{\delta}^{\gamma,\varepsilon,\varepsilon',\eta^{\prime}}e^{-\left(\gamma-\varepsilon-\varepsilon^{\prime}\right)\left\vert x_{n}\right\vert }\nonumber \end{align}
 for $\frac{\gamma}{3}<\left(\gamma-\eta^{\prime}\right).$ Therefore
for $\eta^{\prime}$ sufficiently small, substituting back we get\begin{align}
S_{1} & \leq C_{\delta}^{\gamma,\varepsilon,\varepsilon^{\prime\prime},\eta^{\prime}}e^{-\left(\gamma-\varepsilon-\varepsilon^{\prime}\right)\left\vert x_{n}\right\vert }\sum_{N-1\leq k+l+q\leq3\left(N-1\right)}^{N-1}\beta^{k+l+q+1-N}e^{c\left(k^{2}+l^{2}+q^{2}\right)}e^{c'\left(k+l+q\right)}\label{eq:S_1_bound}\\
 & \equiv C_{\delta}\left(N\right)e^{6cN^{2}}e^{-\left(\gamma-\varepsilon-\varepsilon^{\prime}\right)\left\vert x_{n}\right\vert },\nonumber \end{align}
where we used $c'=cN$, and $C_{\delta}\left(N\right)=O\left(N^{3}\right)$.
For the bound \eqref{eq:S_1_bound} to be violated at least one of
the $c_{m}^{\left(k\right)}$ has to satisfy the inequality \eqref{eq:theorem1}
and the probability for this is bounded by $e^{-c'}e^{-\eta'\left|x_{m}\right|/k}$.
Therefore the probability that \eqref{eq:S_1_bound} will be violated
is bounded by\begin{equation}
e^{-c^{\prime}}{\displaystyle \sum\limits _{k=1}^{N-1}}{\displaystyle \sum\limits _{n=-\infty}^{\infty}}e^{-\eta^{\prime}\left\vert x_{n}\right\vert /k}=e^{-c^{\prime}}{\displaystyle \sum\limits _{k=1}^{N-1}}\left(\frac{2}{1-e^{-\eta^{\prime}/k}}-1\right){\displaystyle \sim\sum\limits _{k=1}^{N-1}}\frac{k}{\eta^{\prime}}\sim\frac{N^{2}}{\eta'}e^{-cN},\label{eq:bootstrap:prob}\end{equation}
where we have expanded the exponent $e^{-\eta^{\prime}/k}$ using
the fact that for large $N$ the sum is dominated by terms with $k\gg1$
and $\eta^{\prime}/k\ll1$. Setting $c^{\prime}=N$ provides the convergence
of the probability with the expansion order, i.e.,\begin{equation}
\Pr\left(\begin{array}{c}
\sum_{m_{1}m_{2}m_{3}}\sum_{N-1\leq k+l+q\leq3\left(N-1\right)}^{N-1}\beta^{k+l+q+1-N}\left\vert V_{n}^{m_{1}m_{2}m_{3}}\right\vert \left\vert c_{m_{1}}^{\ast\left(k\right)}\right\vert \left\vert c_{m_{2}}^{\left(l\right)}\right\vert \left\vert c_{m_{3}}^{\left(q\right)}\right\vert \\
\geq C_{\delta}\left(N\right)e^{6cN^{2}}e^{-\left(\gamma-\varepsilon-\varepsilon^{\prime}-\eta'\right)\left\vert x_{n}\right\vert }\end{array}\right)\leq\frac{const}{\eta'}N^{2}e^{-cN}.\label{eq:bootstrap:term0:bound}\end{equation}
Now we turn to find the point $t_{0}$ defined in \eqref{eq:t0_definition}.
To bound other expressions in \eqref{eq:bootstrap1}, we use\begin{equation}
\left\vert Q_{m}\left(t_{0}\right)\right\vert =2t_{0}\cdot S_{1}\leq M\left(t_{0}\right)e^{-\nu\left\vert x_{m}\right\vert },\label{eq:bootstrap:assumption}\end{equation}
where following \eqref{eq:bootstrap:term0:bound}, \begin{equation}
M\left(t_{0}\right):=2t_{0}C_{\delta}\left(N\right)e^{6cN^{2}}\label{eq:bootstrap:M_definition}\end{equation}
with $\nu=\gamma-\varepsilon-\varepsilon^{\prime}$. In what follows,
unless stated differently, $M$ will mean $M\left(t_{0}\right)$.

First we will bound the linear term in the $Q_{n}$ in \eqref{eq:bootstrap1}.
The sum over $m_{2}$ and $m_{3}$ is bounded similarly to the sums
in the inhomogeneous term and the sum over $m_{1}$ is bounded using
the bootstrap assumption \eqref{eq:bootstrap:assumption}, resulting
in \begin{align}
S_{2}= & {\displaystyle \sum\limits _{m_{2}m_{3}}}\sum_{0\leq k+l\leq2\left(N-1\right)}^{N-1}\beta^{l+k+1}\sum_{m_{1}}\left\vert V_{n}^{m_{1}m_{2}m_{3}}\right\vert \left\vert Q_{m_{1}}^{\ast}\right\vert \left\vert c_{m_{2}}^{\left(l\right)}\right\vert \left\vert c_{m_{3}}^{\left(k\right)}\right\vert \\
 & \leq M{\displaystyle \sum\limits _{m_{2}m_{3}}}\sum_{0\leq k+l\leq2\left(N-1\right)}^{N-1}\beta^{l+k+1}e^{c\left(k^{2}+l^{2}\right)}e^{c'\left(k+l\right)}\sum_{m_{1}}\left\vert V_{n}^{m_{1}m_{2}m_{3}}\right\vert e^{-\nu\left\vert x_{m_{1}}\right\vert }e^{-\left(\gamma-\eta^{\prime}-\varepsilon'\right)\left(\left\vert x_{m_{2}}\right\vert +\left\vert x_{m_{3}}\right\vert \right)}\nonumber \end{align}
 for $\frac{1}{3}\left(\gamma-\varepsilon^{\prime}\right)<\nu$ and
$\frac{\gamma}{3}<\gamma-\eta'$. This is similar to the sum in equation
(\ref{eq:bootstrap3}) and gives a result with a similar dependence
on $\left\vert x_{n}\right\vert $,\begin{eqnarray}
{\displaystyle S_{2}} & \leq & \sum\limits _{m_{2}m_{3}}\sum_{0\leq k+l\leq2\left(N-1\right)}^{N-1}\beta^{l+k+1}\sum_{m_{1}}\left\vert V_{n}^{m_{1}m_{2}m_{3}}\right\vert \left\vert Q_{m_{1}}^{\ast}\right\vert \left\vert c_{m_{2}}^{\left(l\right)}\right\vert \left\vert c_{m_{3}}^{\left(k\right)}\right\vert \\
 & \leq & C_{\delta}\left(N\right)e^{2cN^{2}}Me^{-\left(\gamma-\varepsilon-\varepsilon^{\prime}\right)\left\vert x_{n}\right\vert }.\nonumber \end{eqnarray}
 with the same probability as in (\ref{eq:bootstrap:prob}). Therefore
the linear term in $Q_{n}$ is bounded by the probabilistic bound\begin{equation}
\Pr\left(\begin{array}{c}
\sum_{m_{1}m_{2}m_{3};l,k}\beta^{l+k+1}\left\vert V_{n}^{m_{1}m_{2}m_{3}}\right\vert \left\vert Q_{m_{1}}^{\ast}\right\vert \left\vert c_{m_{2}}^{\left(l\right)}\right\vert \left\vert c_{m_{3}}^{\left(k\right)}\right\vert \\
\geq\beta C_{\delta}\left(N\right)e^{4cN^{2}}Me^{-\left(\gamma-\varepsilon-\varepsilon^{\prime}-\eta'\right)\left\vert x_{n}\right\vert }\end{array}\right)\leq\frac{const}{\eta'}N^{2}e^{-cN},\label{eq:bootstrap:term1:bound}\end{equation}
with $C_{\delta}=O\left(N^{2}\right)$. A similar bound is found for
the third term on RHS of \eqref{eq:bootstrap1}.

The fourth sum of equation \eqref{eq:bootstrap1}\begin{equation}
S_{4}=\sum_{m_{1}m_{2}m_{3};k}\beta^{N+k+1}\left\vert V_{n}^{m_{1}m_{2}m_{3}}\right\vert \left\vert Q_{m_{1}}^{\ast}\right\vert \left\vert Q_{m_{2}}\right\vert \left\vert c_{m_{3}}^{\left(k\right)}\right\vert \end{equation}
is bounded by\begin{equation}
\Pr\left(\begin{array}{c}
{\displaystyle \sum\limits _{m_{1}m_{2}m_{3};k}}\beta^{N+k+1}\left\vert V_{n}^{m_{1}m_{2}m_{3}}\right\vert \left\vert Q_{m_{1}}^{\ast}\right\vert \left\vert Q_{m_{2}}\right\vert \left\vert c_{m_{3}}^{\left(k\right)}\right\vert \\
\geq C_{\delta}\left(N\right)e^{2cN^{2}}\beta^{N+1}M^{2}e^{-\left(\gamma-\varepsilon-\varepsilon^{\prime}-\eta'\right)\left\vert x_{n}\right\vert }\end{array}\right)\leq\frac{const}{\eta'}N^{2}e^{-cN},\label{eq:bootstrap:term2:bound}\end{equation}
with $C_{\delta}=O\left(N\right)$. The last term in \eqref{eq:bootstrap1}
is\begin{equation}
S_{5}=\beta^{2N+1}\sum_{m_{1}m_{2}m_{3}}\left\vert V_{n}^{m_{1}m_{2}m_{3}}\right\vert \left\vert Q_{m_{1}}^{\ast}\right\vert \left\vert Q_{m_{2}}\right\vert \left\vert Q_{m_{3}}\right\vert .\label{eq:bootstrap:term3}\end{equation}
and it is bounded by\begin{equation}
S_{5}\leq\beta^{2N+1}M^{3}e^{-\left(\gamma-\varepsilon-\varepsilon^{\prime}\right)\left\vert x_{n}\right\vert }.\label{eq:bootstrap:term3:bound}\end{equation}
To summarize, substitution of the equality \eqref{eq:bootstrap:assumption}
in \eqref{eq:bootstrap1} to find a time $t_{0}$ for which assumption
\eqref{eq:bootstrap1-2} is valid results in the following inequality
which is satisfied with the probability that is the sum of the probabilities
given by the RHS of \eqref{eq:bootstrap:term0:bound}, \eqref{eq:bootstrap:term1:bound},
\eqref{eq:bootstrap:term2:bound},\begin{equation}
2t_{0}S_{1}\leq t_{0}\cdot S_{1}+t_{0}C_{\delta}\left(N\right)\left(e^{6cN^{2}}+2\beta e^{4cN^{2}}M+4\beta^{N+1}e^{2cN^{2}}M^{2}+8\beta^{2N+1}M^{3}\right)e^{-\left(\gamma-\varepsilon-\varepsilon^{\prime}\right)\left\vert x_{n}\right\vert },\label{eq:bootstrap:final}\end{equation}
with $C_{\delta}=O\left(N^{3}\right)$. Multiplying by $e^{\left(\gamma-\varepsilon-\varepsilon^{\prime}\right)\left\vert x_{n}\right\vert }$
both sides of the inequality \eqref{eq:bootstrap:final} and taking
the infimum with respect to $n$, gives \begin{equation}
\inf_{n}\left(S_{1}e^{\left(\gamma-\varepsilon-\varepsilon^{\prime}\right)\left\vert x_{n}\right\vert }\right)\leq2C_{\delta}\left(N\right)\left(\beta e^{4cN^{2}}M+2\beta^{N+1}e^{2cN^{2}}M^{2}+4\beta^{2N+1}M^{3}\right).\end{equation}
Setting $A_{N}:=e^{2cN^{2}}$, $\tilde{C}:=\inf_{n}\left(S_{1}e^{\left(\gamma-\varepsilon-\varepsilon^{\prime}\right)\left\vert x_{n}\right\vert }\right)$
and using the definition of $M$ (see \eqref{eq:bootstrap:M_definition})
\begin{equation}
\tilde{C}\leq2\beta t_{0}\cdot A_{N}^{5}C_{\delta}+4\beta^{N+1}t_{0}^{2}A_{N}^{7}C_{\delta}^{2}+8\beta^{2N+1}t_{0}^{3}A_{N}^{9}.\label{eq:bootstrap:M}\end{equation}
For sufficiently small $\beta t_{0}$ the first term on the RHS of
\eqref{eq:bootstrap:M} is dominant and therefore,\begin{equation}
t_{0}\geq\frac{\tilde{C}}{2\beta A_{N}^{5}C_{\delta}}.\label{eq:bootstrap_t0}\end{equation}
Note, that $\tilde{C}>0$, since it is an infimum of a sum of positive
quantities, however we do not calculate it explicitly in this paper,
nevertheless it is likely to be of the order of $C_{\delta}A_{N}^{3}$.
This proves that,\[
\left|Q_{n}\left(t\right)\right|\leq M\left(t\right)\cdot e^{-\left(\gamma-\varepsilon-\varepsilon^{\prime}\right)\left\vert x_{n}\right\vert }=2t\cdot C_{\delta}e^{6cN^{2}}e^{-\left(\gamma-\varepsilon-\varepsilon^{\prime}\right)\left\vert x_{n}\right\vert }\]
for times $t\leq t_{0}$ , where $M\left(t\right)$ is given by extending
the definition \eqref{eq:bootstrap:M_definition} by replacing $t_{0}$
by $t$.
\begin{thm}
For $t=O\left(\beta^{-1}\right)$ with $c\cong2$, and assuming that
Conjecture \ref{con:joint_dist} holds \begin{align}
\Pr\left(\left\vert Q_{n}\right\vert \geq2t\cdot C_{\delta}e^{6cN^{2}}e^{-\left(\gamma-\varepsilon-\varepsilon^{\prime}\right)\left\vert x_{n}\right\vert }\right) & \leq\frac{const}{\eta'}N^{2}e^{-cN}\label{eq:final_theorem}\end{align}

\end{thm}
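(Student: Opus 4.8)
The plan is to close the bootstrap (continuity) argument already set up above, feeding the probabilistic estimates on $S_1,\dots,S_5$ that were derived in \eqref{eq:bootstrap:term0:bound}, \eqref{eq:bootstrap:term1:bound}, \eqref{eq:bootstrap:term2:bound} and \eqref{eq:bootstrap:term3:bound} into the integrated inequality \eqref{eq:bootstrap1} for the remainder. First I would record the two ingredients of the bootstrap. Since $Q_n$ is continuous with $Q_n(0)=0$, its linear-in-$t$ behavior near the origin gives \eqref{eq:bootstrap1-1} for all sufficiently small $\tau$, so the set of times on which $\left|Q_n(t)\right|<2tS_1$ is nonempty. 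If the conclusion failed, i.e. if \eqref{eq:bootstrap1-2} held at some later $t$, continuity would produce a first crossing time $t_0$ satisfying \eqref{eq:t0_definition}. The strategy is to show that, conditioned on the high-probability event on which Theorem \ref{thm:main_theorem} holds simultaneously for every $c_m^{(k)}$ (this is where Conjecture \ref{con:joint_dist}, through Corollary \ref{cor:main_theorem} and Theorem \ref{thm:main_theorem}, enters), such a crossing cannot occur before $t_0=O(\beta^{-1})$, which is precisely the claimed range of validity.

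Next I would substitute the bootstrap hypothesis \eqref{eq:bootstrap:assumption}, namely $\left|Q_m(t_0)\right|\le M(t_0)e^{-\nu\left|x_m\right|}$ with $\nu=\gamma-\varepsilon-\varepsilon'$ and $M(t_0)=2t_0 C_\delta(N)e^{6cN^2}$, into the right-hand side of \eqref{eq:bootstrap1} at $t_0$. The inhomogeneous term is $S_1$, controlled by \eqref{eq:S_1_bound}; the terms linear in $Q$ are controlled as in \eqref{eq:bootstrap:term1:bound}; the quadratic terms as in \eqref{eq:bootstrap:term2:bound}; and the cubic term by \eqref{eq:bootstrap:term3:bound}. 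Summing these contributions reproduces \eqref{eq:bootstrap:final}. Cancelling the common $t_0 S_1$, dividing by $e^{-\nu\left|x_n\right|}$ and taking the infimum over $n$ yields the scalar inequality \eqref{eq:bootstrap:M} for $\tilde C:=\inf_n\bigl(S_1 e^{\nu\left|x_n\right|}\bigr)$. For $\beta t_0$ small the linear-in-$t_0$ term dominates the $t_0^2$ and $t_0^3$ contributions, forcing the lower bound \eqref{eq:bootstrap_t0} on $t_0$; equivalently, no crossing is possible for $t\le t_0=O(\beta^{-1})$, so $\left|Q_n(t)\right|\le M(t)e^{-\nu\left|x_n\right|}$ holds throughout, which is the stated bound.

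For the probability I would use a union bound. Each of \eqref{eq:bootstrap:term0:bound}, \eqref{eq:bootstrap:term1:bound} and \eqref{eq:bootstrap:term2:bound} fails only on a set of measure at most $\frac{const}{\eta'}N^2 e^{-cN}$; this is exactly the content of the summation \eqref{eq:bootstrap:prob}, in which the entropy factor $R_k\le e^{2k}$ together with the sum over lattice sites and over the orders $k$ have already been carried out. Adding finitely many contributions of the same form leaves a bound $\frac{const}{\eta'}N^2 e^{-cN}$, which gives \eqref{eq:final_theorem}.

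The main obstacle I expect is the self-consistency of the closure, i.e. verifying that reinserting $M(t_0)\sim t_0$ into the nonlinear terms does not destroy the balance. The quadratic and cubic contributions carry $M^2\sim t_0^2$ and $M^3\sim t_0^3$ together with the extra smallness $\beta^{N+1}$ and $\beta^{2N+1}$ and the entropy prefactors $A_N^7=e^{14cN^2}$ and $A_N^9=e^{18cN^2}$ displayed in \eqref{eq:bootstrap:M}. For the linear term $2\beta t_0 A_N^5 C_\delta$ to dominate one must check that $\beta^{N+1}t_0^2 A_N^7 C_\delta^2$ and $\beta^{2N+1}t_0^3 A_N^9$ are negligible when $t_0=O(\beta^{-1})$ and $N$ is large; that is, the high-order powers of $\beta$ must beat the $e^{cN^2}$ growth of the constants, which fixes the admissible joint range of $\beta$ and $N$. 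Making this competition quantitative is the delicate step; everything else is assembly of estimates already in hand.
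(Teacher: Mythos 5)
Your proposal is correct and follows essentially the same route as the paper: the same continuity/first-crossing bootstrap, the same decomposition into the inhomogeneous term $S_1$ and the linear, quadratic and cubic terms in $Q$, the same substitution of $M(t_0)=2t_0C_\delta(N)e^{6cN^2}$ leading to \eqref{eq:bootstrap:M} and \eqref{eq:bootstrap_t0}, and the same union bound over \eqref{eq:bootstrap:term0:bound}, \eqref{eq:bootstrap:term1:bound} and \eqref{eq:bootstrap:term2:bound} for the probability. Even the delicate point you flag (the linear-in-$t_0$ term dominating the $t_0^2$ and $t_0^3$ terms for $\beta t_0$ small) is handled in the paper by exactly the same observation, stated without further quantification.
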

The bound on the probability of statement (\ref{eq:final_theorem})
is calculated by summing the RHS of (\eqref{eq:bootstrap:term0:bound},\eqref{eq:bootstrap:term1:bound},\eqref{eq:bootstrap:term2:bound}).
The contribution of the remainder term is \begin{equation}
\left|\beta^{N}Q_{n}\right|\leq const\cdot e^{6cN^{2}+N\ln\beta+\ln t}e^{-\left(\gamma-\varepsilon-\varepsilon^{\prime}\right)\left\vert x_{n}\right\vert }.\end{equation}
Note that for a given $t$ and $\beta$ there is an optimum $N$ for
which the remainder is minimal. Additionally, for any fixed time and
order $N$, $\lim_{\beta\rightarrow0}\left|\beta^{N}Q_{n}\right|/\beta^{N-1}=0$,
which shows that the series is in fact an asymptotic one \cite{Erdelyi1956}.

\section{\label{sec:Summary}Summary}

In this paper a perturbation expansion in powers of $\beta$ was developed
(Sections \ref{sec:Organization-of-the}, \ref{sec:Elimination-of-secular})
for the solution of the NLSE with a random potential. It required
the removal of the secular terms for this problem. To best of our
knowledge it is the first time it was done for a multivariate problem.
The quality of the expansion to the second order was tested in Section
\ref{sec:Numerical-results}. In Section \ref{sec:The-entropy-problem}
it was shown that the number of terms grows exponentially with the
order. In Section \ref{sec:Bounding-the-general} a probabilistic
bound on the general term \eqref{eq:theorem1} was derived. It relies
on the Conjecture \ref{con:joint_dist}. The resulting bound was tested
numerically. Finally, a bound on the remainder was obtained for a
finite time, showing that the series is asymptotic. For time shorter
than $t_{0}$ which is given by \eqref{eq:bootstrap_t0} there is
a front $\bar{x}\left(t\right)\propto\ln t$ such that for $x_{n}>\bar{x}\left(t\right)$
both the remainder, $\beta^{N}Q_{n}\left(t\right)$ and $c_{n}\left(t\right)$
are exponentially small.

The work leaves several open problems that should be subject of further
research:
\begin{enumerate}
\item Turning the perturbation theory developed in the present work into
a practical method for solution of the NLSE and similar nonlinear
differential equations. The control on the error should be obtained
using the methods presented in Sec. \ref{sec:Bounding-the-remainder}.
\item Can the front $\bar{x}\left(t\right)\propto\ln t$ be found for arbitrarily
long times ?
\item The asymptotic nature of the series. Is it just an asymptotic series
or a convergent one ?
\item If the series is asymptotic can it be resummed ?
\item The $e^{ck^{2}+c'k}$ in \eqref{eq:theorem1} results from the repeated
use of the Hölder inequality and a very generous bound on the probability
distributions. An effort should be made to improve it.
\item There are various properties of the Anderson model that have been
used here. Some of them were tested numerically. It would be of great
value if Conjecture \ref{con:joint_dist}, Corollary \ref{cor:main_theorem}
and Conjecture \ref{con:central_limit} were rigorously established,
even at the limit of a strong disorder. In the present work we rely
only on Corollary \ref{cor:main_theorem} (that was tested numerically,
see Fig. \ref{fig:mean_of_f}). The rigorous proof of the unimodality
of $\gamma\left(E\right)$ for the uniform distribution of the random
potentials, $\varepsilon_{x}$, may also be useful.
\item It would be very useful if Conjecture \ref{con:e_prime} could be
rigorously obtained.
\end{enumerate}
We enjoyed many extensive illuminating and extremely critical discussions
with Michael Aizenman. We also had informative discussions with S.
Aubry, V. Chulaevski, S. Flach, I. Goldshield, M. Goldstein, I. Guarneri,
M. Sieber, W.-M. Wang and S. Warzel. This work was partly supported
by the Israel Science Foundation (ISF), by the US-Israel Binational
Science Foundation (BSF), by the USA National Science Foundation (NSF),
by the Minerva Center of Nonlinear Physics of Complex Systems, by
the Shlomo Kaplansky academic chair, by the Fund for promotion of
research at the Technion and by the E. and J. Bishop research fund.
The work was done partially while SF was visiting the Institute of
Mathematical Sciences, National University of Singapore in 2006 and
the Center of Nonlinear Physics of Complex Systems in Dresden in 2007
and while YK visited the department of Mathematics at Rutgers University
and while AS was visiting the Lewiner Institute of Theoretical Physics
at the Technion in 2007. The visits were supported in part by these
Institutes.

\section*{Appendix}

The average Lyapunov exponent was calculated numerically using the
transfer matrix technique for a uniform distribution defined in \eqref{eq:uniform_dist}.
The results are presented in Fig. \ref{fig:gamma_E}. It can be seen
that $\gamma\left(E\right)$ is unimodal. %
\begin{figure}
\includegraphics[width=0.95\textwidth]{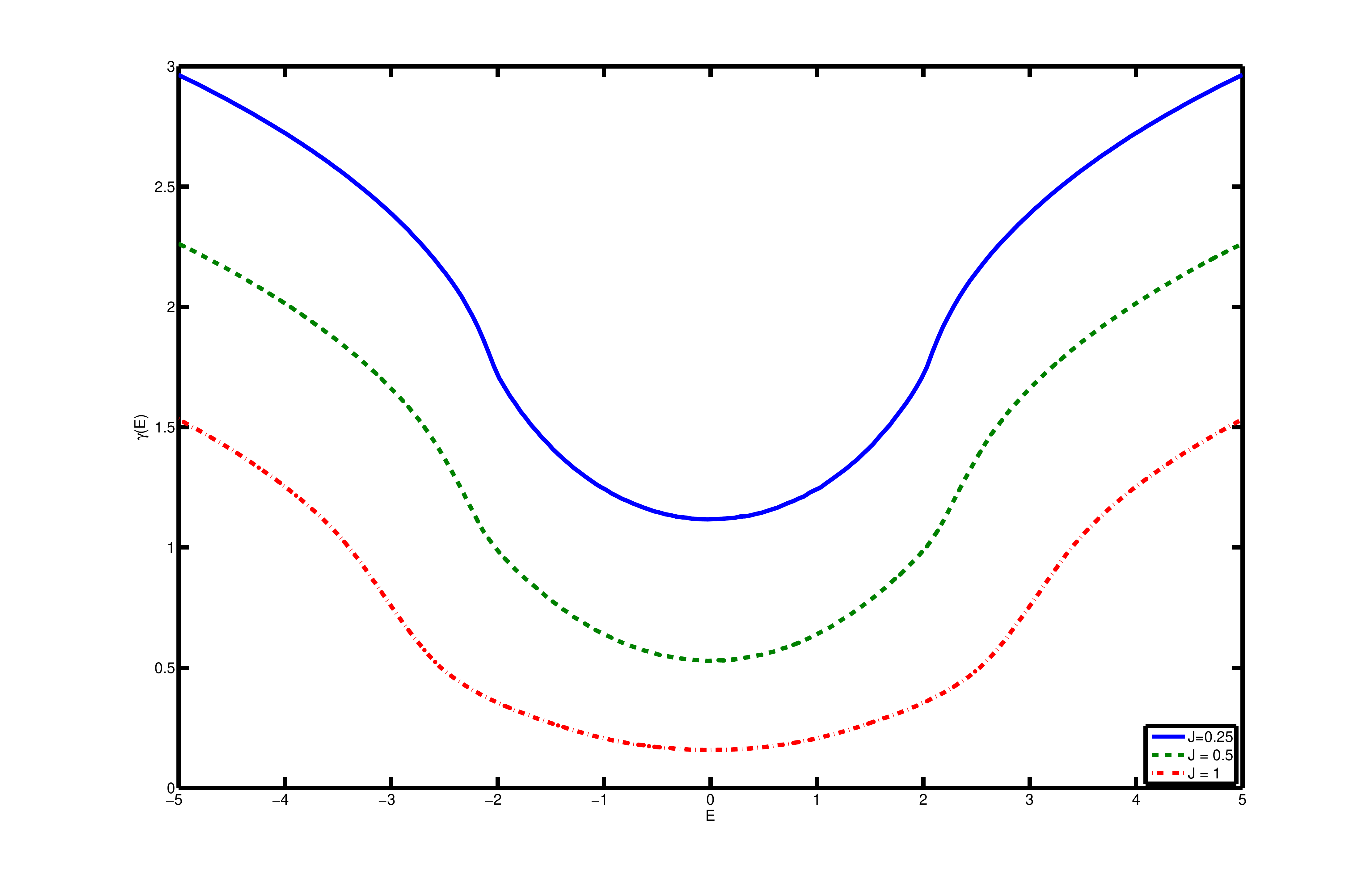}

\caption{\label{fig:gamma_E}The Lyapunov exponent as a function of energy
for the uniform distribution defined in \eqref{eq:uniform_dist}.
The solid (blue) line is for $J=0.25$, the dashed (green) line is
for $J=0.5$ and the dot-dashed line (red) is for $J=1$.}

\end{figure}

\end{document}